\newtheorem{theorem}{Theorem}
\newtheorem{corollary}{Corollary}
\newtheorem{definition}{Definition}
\newtheorem{example}{Example}
\newtheorem{lemma}{Lemma}
\newtheorem{proposition}{Proposition}
\newtheorem{remark}{Remark}
\newenvironment{proof}[1][Proof]{\emph{#1.} }{\  \hfill $\square $ \vspace{5 pt}}
\newcommand*\samethanks[1][\value{footnote}]{\footnotemark[#1]}
\begin{document}

\title{The lattice of  envy-free many-to-many matchings with contracts\thanks{We thank Jordi Massó and Alejandro Neme 
for very detailed comments. We acknowledge financial support
from UNSL through grants 032016 and 030320, and from Consejo Nacional
de Investigaciones Cient\'{\i}ficas y T\'{e}cnicas (CONICET) through grant
PIP 112-200801-00655, and from Agencia Nacional de Promoción Cient\'ifica y Tecnológica through grant PICT 2017-2355.}}


\author{Agustín G. Bonifacio\thanks{Instituto de Matem\'{a}tica Aplicada San Luis, Universidad Nacional de San
Luis and CONICET, San Luis, Argentina, and RedNIE. Emails: \texttt{abonifacio@unsl.edu.ar} (A. G. Bonifacio), \texttt{ncguinazu@unsl.edu.ar} (N. Gui\~{n}azu), \texttt{nmjuarez@unsl.edu.ar} (N. Juarez), \texttt{paneme@unsl.edu.ar} (P. Neme) and \texttt{joviedo@unsl.edu.ar} (J. Oviedo).} \and  Nadia Gui\~{n}az\'u \samethanks[2] \and Noelia Juarez\samethanks[2] \and Pablo Neme\samethanks[2] \and Jorge Oviedo\samethanks[2]}

\date{\today}
\maketitle

\begin{abstract}
We study envy-free allocations in a  many-to-many matching model with contracts in which agents on one side of the market (doctors) are endowed with substitutable choice functions and agents on the other side of the market (hospitals) are endowed with responsive preferences. Envy-freeness is a weakening of stability that allows blocking contracts involving a hospital with a vacant position and a doctor that does not envy any of the doctors that the hospital currently employs. We show that the set of envy-free allocations has a lattice structure. Furthermore, we define a Tarski operator on this lattice and  use it to model a vacancy chain dynamic process by which, starting from any envy-free allocation, a stable one is reached. 

\bigskip

\noindent \emph{JEL classification:} C78, D47.\bigskip

\noindent \emph{Keywords:} Matching, envy-freeness, lattice, Tarski operator, re-equilibration process, vacancy chain. 

\end{abstract}

\section{Introduction}
\label{intro}

 Models of many-to-many matching with contracts subsume as special cases many-to-many matching markets and buyer-seller markets with heterogeneous and indivisible goods.

In this paper, we study envy-free allocations in a   many-to-many matching model with contracts in which agents on one side of the market (that we call doctors) are endowed with substitutable choice functions and agents on the other side of the market (that we call hospitals) are endowed with responsive preferences. Loosely speaking, 
(doctor) envy-freeness is a weakening of stability that allows blocking contracts involving a hospital with a vacant position and a doctor that does not envy any of the doctors that the hospital currently employs.

In this setting, we study the set of envy-free allocations and prove that it has a lattice structure under a partial order by which an allocation dominates another one if, for each doctor, when the contracts of both allocations are available, she selects only the contracts of the first one. Moreover, we define a Tarski operator on this lattice and show that the set of its fixed points consists of the set of stable allocations.

The Tarski operator allows us to model vacancy chain dynamics as well. Consider a situation in which   some doctors retire and the   vacant positions generated are filled by doctors who do not have envy towards any  of their new co-workers. If the market is stable before the retirements, then it will be envy-free after.  Moreover, if our goal is to study how to restore stability, a re-equilibration process can be carried out within the envy-free allocations. To do this, we apply repeatedly the Tarski operator. In each stage, starting from an envy-free allocation, the signing of new contracts filing vacant positions (and generating new vacant positions elsewhere) produces a new envy-free allocation.  We show that the sequence of envy-free allocations thus generated  always converges to a stable allocation.

This process can be described more precisely if we also ask  doctors' choice functions to satisfy the ``law of aggregate demand''. This condition says  that when a doctor chooses from an expanded set of contracts, she  signs at least as many contracts as before.  We show two further results  under this requirement: (i) the fixed point of the Tarski operator starting from an arbitrary envy-free allocation is equal to the join of that allocation and the hospital-optimal 
 stable allocation, and (ii) the number of contracts signed by a doctor in an envy-free allocations is at most equal to the number of contracts  she signs in  any stable allocation.  
 
The use of Tarski operators is not new in the matching literature. Results on the existence of stable allocations rely on the construction of a lattice and a Tarski operator on that lattice that has as its fixed points elements that are in correspondence with  the stable allocations \citep[see][for more details]{adachi2000characterization,fleiner2003fixed,echenique2004core,hatfield2005matching,hatfield2017contract}. However, in most cases, the elements of the lattice are \emph{not} allocations themselves, and their economic interpretation is not clear. The contribution of our paper, that generalizes the approach first presented by \cite{wu2018lattice} for the many-to-one model with responsive preferences and then extended to substitutable preferences by \cite{bonifacio2021lattice}, is to provide a simpler and more economically meaningful construction.

The model and results studied in this paper encompass the many-to-one models of \cite{wu2018lattice} and \cite{bonifacio2021lattice} and the results therein. When each doctor can be assigned to at most one hospital and hospitals have responsive preferences, our model is equivalent to the one presented in \cite{wu2018lattice}. When each doctor has substitutable preferences, and each hospital can be assigned to at most one doctor, our model is equivalent to the one presented in \cite{bonifacio2021lattice} in which doctors play the role of firms and hospitals play the role of workers. 
Although the  results in this paper are generalizations of the ones presented in \cite{bonifacio2021lattice} for the many-to-one case, the fact that our setting has a  many-to-many nature  and  involves contracts  demands new techniques of proof for almost all results.

It is worth mentioning that many-to-many  markets (let alone the inclusion of contracts) are non-trivial extensions of one-to-one markets: many properties
of one-to-one markers do not extend to this wider class. Although many-to-one markets (with responsive preferences) are isomorphic
to  one-to-one markets \citep{roth1992two}, there is no
such isomorphism for many-to-many markets, even under responsive
preferences. Furthermore, (pairwise) stability is no longer equivalent in this broader framework to
other solution concepts introduced in the literature such as core stability, group
stability or setwise stability, and stable allocations may even be inefficient \citep[see][for more detailed discussions]{blair1988lattice,roth1992two,sotomayor1999three,echenique2004theory}.

The rest of the paper is organized as follows. Section \ref{seccion preliminar} presents the model and some preliminaries. All the results of the paper are presented in Section \ref{section results}: the lattice structure of the set of envy-free allocations,  the Tarski operator with its related re-equilibration process, and some further results assuming that doctors' choice functions satisfy, besides substitutability, the ``law of aggregate demand''. In Section \ref{seccion concludings} some concluding remarks are in order. Finally, an Appendix contains all the proofs for our results.

\section{Model and preliminaries}\label{seccion preliminar}
A model of many-to-many matching with contracts is specified by a set of \emph{doctors} $D$, a set of \emph{hospitals} $H$, and a set of \emph{contracts} $X$.  
For each contract $x\in X$, $x_D\in D$ is the doctor associated to $x$ and $x_H\in H$ is the hospital associated to $x.$ Given a set of contracts $Y\subseteq X$ and an agent $a\in D\cup H,$ let  $Y_a=\{x\in Y:a\in\{x_D,x_H\}\}$ be the set of contracts in $Y$ that name agent $a$.

Each doctor $d\in  D$ in endowed with a choice function $C_{d}$  that fulfills the following properties:

\begin{enumerate}[(i)]
\item $ C_{d}\left( Y\right) \subseteq Y_{d}$ and if $x,x'\in C_{d}\left(Y\right) $ and $x\neq x'$, then $x_{H}\neq x'_{H}.$

\item Substitutability: $C_{d}(Y)\cap Y' \subseteq C_{d}(Y')$ whenever $Y'\subseteq Y \subseteq X$.\footnote{This is equivalent to the following: $x\in C_d(Y)$ implies $x\in C_d(Y' \cup \{x\})$ whenever $x\in X$ and $Y'\subseteq Y \subseteq X$.}

\item Consistency:  $C_{d}(Y')=C_{d}(Y)$ whenever $C_{d}(Y)\subseteq Y'\subseteq Y$. 

\item Path independence: $C_{d}\left( Y \cup Y'\right) =C_{d}\left( C_{d}( Y)\cup Y'\right)  $ for each pair of subsets $Y$ and $Y'$ of $X$.
\end{enumerate}
Property (i) says that, given a set of contracts $Y$, the choice function of doctor $d$ selects a subset of contracts of $Y$ that name doctor $d$ and that different contracts of this choice set must name different hospitals. Property (ii) requires  the choice function to be  substitutable in the sense that no contract becomes desirable when some other contract becomes available. Property (iii) says that the choice set of a subset of $Y$ that is a superset of the choice set of $Y$ is equal to the choice set of $Y$. Lastly, Property (iv) says that the choice over a set remains the same when the set is segmented arbitrarily, the choice applied to one segment and finally the choice applied again to all chosen contracts from the segments. It is straightforward to see that Properties (ii) and (iii) imply Property (iv).

Given a set of contracts $Y$, as a consequence of Properties (i) and (iii),  since $C_d(Y)\subseteq Y_d \subseteq Y$ we have  $C_d(Y_d)=C_d(Y)$. This fact is going to be used extensively throughout the paper.

Each hospital $h\in H$ has a quota denoted by $q_h$, and a strict preference relation $\succ_h$ over all subsets of contracts in a \textbf{responsive} manner. This is formalized by the following properties:  
\begin{enumerate}
\item[(v)]for each $Y\subseteq X$ such that  $|Y_h|> q_h$,  $  \emptyset\succ_h Y_h,$
\item[(vi)] for each $Y\subseteq X$ such that $|Y_h|\leq q_h$ , each $x\in Y_h\cup \{\emptyset\}$, and each $x'\in X_h\setminus Y_h$,
$$
(Y_h\setminus \{x\})\cup \{x'\}\succ_h Y_h \text{ if and only if } x'\succ_h x.\footnote{For a thorough account of responsive preferences, see \cite{roth1992two} and references therein.} 
$$

\end{enumerate}

Property (v) says that, for each hospital,  each subset of contracts of cardinality greater than its quota is not acceptable. Lastly, Property (vi) says that, for each hospital, when a subset of contracts that name it has cardinality less than or equal to its quota, replacing any contract of this subset for a more preferred one leads to a more preferred subset of contracts.
Given preference $\succ_h$, define the  choice function  $C_h$ as follows: for each $Y\subseteq X$, $C_h(Y)$ is the most preferred subset of $Y_h$ according to $\succ_h$. Notice that $C_h$ also fulfills Conditions (i) to (iv) in the definition of $C_d$. 

A set of contracts $Y\subseteq X$ is an \textbf{allocation} if:
\begin{enumerate}[(i)]

\item for distinct $x,x' \in Y,$ $x_D \neq x'_D$ or $x_H \neq x'_H,$ and 
\item $|Y_h|\leq q_h$ for each $h\in H.$
\end{enumerate} 
The empty allocation is denoted by $Y^{\emptyset}$ and contains no contract. Furthermore, denote by $\mathcal{A}$ the set of all allocations. Condition (i) says that, in an allocation, no  doctor-hospital pair can sign more than one contract. Condition (ii) says that, in an allocation, no hospital can sign more contracts than its quota.


Given $Y\in \mathcal{A}$, we say that $Y$ is  an \textbf{individually rational} allocation if $C_{a}\left( Y\right) =Y_{a}$
for each $a\in D\cup H.$ This implies, by substitutability, that for each contract $x\in Y$ we have   $x\in C_{x_{D}}\left( \{x\} \right) $ and  $x\in C_{x_{H}}\left( \{x\} \right) .$\footnote{Notice that, by responsiveness,  $x\in C_{x_{H}}\left( \{x\} \right) $ is equivalent to $x\succ_{x_H}\emptyset$.} Denote by $\mathcal{I}$ the set of all individually rational allocations. Given $Y\in \mathcal{A}$, $x\in X \setminus Y$ is a \textbf{blocking contract} for $Y$ if
 $x\in C_{x_{D}}(Y\cup \{x\})$ and $x\in C_{x_{H}}(Y\cup \{x\})$. 
   Furthermore, allocation $Y$ is  \textbf{stable} if $Y$ is individually rational and there is no blocking contract for $Y$. Denote by $\mathcal{S}$ the set of all stable allocations.

\begin{remark}\label{remark choice}Given $Y\in \mathcal{A}$, $x\in X \setminus Y$ and $h=x_H,$ the fact that $x\in C_{h}(Y\cup \{x\})$ is,  by responsiveness,    equivalent to  $x \succ_h \emptyset$  when  $|Y_h|<q_h$, and to the existence of $x' \in Y_h$ such that  $x\succ_h x'$ when   $|Y_h|=q_h.$

\end{remark}

The key notion that we introduce in this paper is the concept of envy-freeness. In an envy-free allocation no  doctor has justified envy towards any other doctor. This is formalized in  the following definition.

\begin{definition}
Given  $Y\in \mathcal{A}$, doctor $d'$ has \textbf{justified envy}  towards doctor $d$ at $Y$ (possibly $d=d'$) if there are $x\in Y_d$ and $x'\in X_{d'}\setminus Y$ such that:\begin{enumerate}[(i)]
\item $x'_H =x_H=h,$
\item $x' \succ_{h} x$ and $x' \in C_{d'}(Y \cup \{x'\})$.
 
\end{enumerate}
An allocation $ Y $ is \textbf{(doctor) envy-free} if it is individually rational and no doctor has justified envy at $Y$. 
\end{definition}
When a doctor has justified envy towards another doctor in an allocation there are two contracts involved: one in the allocation, that names the  doctor who is envied; and another one outside the allocation, that names the  envious doctor.  Condition (i) says that these two contracts name the same hospital. Condition (ii) says that the named  hospital prefers the contract that names the envious doctor over the contract that names the envied doctor, and that the envious doctor wishes to sign the new contract with that hospital. Envy-freeness, therefore, is a weakening of stability that allows blocking contracts involving a hospital with a vacant position and a doctor that has no justified envy  towards  any doctor that the hospital currently employs. This is, if $Y$ is an envy-free allocation and $x$ is a blocking contract for $Y$, then $x\in C_{x_D}(Y\cup \{x\})$, $|Y_{x_H}|< q_{x_H}$, $x\succ_{x_H}\emptyset$, and $x'\succ_{x_H} x$ for each $x'\in Y$. 
Denote by $\mathcal{E}$ the set of all envy-free allocations.

\section{Results}\label{section results}

In this section, we present our results. First, we show that the set of envy-free allocations has a lattice structure. To do this, we define a partial order and a binary operation within the set of envy-free allocations and prove that this binary operation is the join with respect to that partial order. Second, we define a Tarski operator on the lattice of envy-free allocations and show that has the stable allocations as the set of fixed points. This operator models a re-equilibration process that, starting from any envy-free allocation, leads to a stable one. Moreover, in the final subsection, we present some results that follows from requiring  that doctors' choice functions fulfill, in addition, the law of aggregate demand. All proofs are relegated to the Appendix.

\subsection{Lattice structure of the set of envy-free allocations}
In a many-to-many matching model, \cite{blair1988lattice} defines a partial order  that can be generalized to a many-to-many model with contracts as follows. A set of contracts  dominates another if each doctor wishes to keep the contracts signed under the first set, even if all contracts of the  second set are also available, and do not wish to sign  any new contract. Formally, given two sets of contracts $Y,Y'\in X$, we say that \textbf{$\boldsymbol{Y}$ weakly}\textbf{ Blair-dominates $\boldsymbol{Y'}$} and write $\boldsymbol{Y \succeq_D Y'}$ when $Y=C_d\left(Y \cup Y'\right)$ for each $d\in D$.  If $Y \succeq_D Y'$ and $Y \neq Y',$  we say that \textbf{$\boldsymbol{Y$ Blair-dominates $Y'}$} and  write  $\boldsymbol{Y \succ_D Y'}.$

 Given two doctor envy-free
allocations $ Y $ and $ Y' $, we define  $ \lambda^{Y,Y'}\subseteq 2^X$ as follows:
\begin{enumerate}[(i)]
\item for each $d\in D,$ $\lambda^{Y,Y'} _d=C_d\left(Y\cup Y'\right),$
\item for each $h \in H,$ $\lambda^{Y,Y'} _h=\{x\in \bigcup_{d\in D} \lambda^{Y,Y'}_d:x_H=h  \}.$
\end{enumerate}
\noindent   Under $\lambda^{Y,Y'},$ (i) doctors want to  sign the best subset of contracts  among those signed by them in either allocation, and (ii) hospitals agree with doctors  that want to sign a contract with  them. Notice that by item (i),  if $Y_d=Y_d'=\emptyset$, then $\lambda^{Y,Y'}_d=\emptyset.$ The following proposition shows that  $\lambda^{Y,Y'}$ is an envy-free allocation, and that  it is actually the \emph{join} of $Y$ and $Y'$.\footnote{Given a partially ordered set $(\mathcal{L},\succeq)$, and two elements $x,y\in \mathcal{L}$, an element $z\in \mathcal{L}$ is an \textit{upper bound} of $x$ and $y$ if $z\succeq x$ and $z\succeq y$. An element $w\in \mathcal{L}$ is the \textit{join} of $x$ and $y$ if and only if (i) $w$ is an upper bound of $x$ and $y$, and (ii) $t\succeq w$ for each upper bound $t$ of $x$ and $y$. The definitions of \textit{lower bound} and \textit{meet} of $x$ and $y$ are dual and we omit them.}

\begin{proposition}\label{lemma lambda in E} 
Let $Y$ and $Y'$ be two distinct envy-free allocations. Then,
\begin{enumerate}[(i)]
\item $\lambda^{Y,Y'} $ is a  envy-free allocation.
\item $\lambda^{Y,Y'}$ is the join of $Y$ and $Y'$ under the partial order $\succeq_D$.
\end{enumerate}

\end{proposition}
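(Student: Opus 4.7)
The plan is to prove (i) and (ii) in turn, exploiting path independence of doctors' choice functions as the workhorse so that expressions of the form $C_d(A \cup C_d(Y \cup Y'))$ can be collapsed to $C_d(A \cup Y \cup Y')$. For (i) the subgoals are that $\lambda^{Y,Y'}$ is an allocation, is individually rational, and has no justified envy. The condition that distinct contracts in $\lambda^{Y,Y'}$ disagree on either doctor or hospital is automatic from Property (i) of each $C_d$. The main obstacle in (i) is the quota inequality $|\lambda^{Y,Y'}_h| \leq q_h$. I would partition $U := \lambda^{Y,Y'}_h$ into $U_1 = U \cap Y \cap Y'$, $U_2 = (U \cap Y) \setminus Y'$ and $U_3 = (U \cap Y') \setminus Y$, noting $U_1 \cup U_2 \subseteq Y_h$ and $U_1 \cup U_3 \subseteq Y'_h$. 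For any $x \in U_3$, substitutability yields $x \in C_{x_D}(Y \cup \{x\})$; combined with $x \notin Y$ and envy-freeness of $Y$, this forces $y \succ_h x$ for every $y \in Y_h$. Symmetrically, every $y \in Y'_h$ satisfies $y \succ_h x$ for each $x \in U_2$. Were $U_2$ and $U_3$ both nonempty, choosing $x_2 \in U_2 \subseteq Y_h$ and $x_3 \in U_3 \subseteq Y'_h$ would give $x_2 \succ_h x_3$ and $x_3 \succ_h x_2$ at once, contradicting strictness of $\succ_h$; hence $U \subseteq Y_h$ or $U \subseteq Y'_h$, so $|U| \leq q_h$.

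Individual rationality then follows easily: for doctors, consistency gives $C_d(\lambda^{Y,Y'}) = C_d(\lambda^{Y,Y'}_d) = C_d(C_d(Y \cup Y')) = \lambda^{Y,Y'}_d$; for hospitals, every $x \in \lambda^{Y,Y'}_h$ is acceptable (inherited from individual rationality of $Y$ and $Y'$), so $|\lambda^{Y,Y'}_h| \leq q_h$ combined with responsiveness yields $C_h(\lambda^{Y,Y'}) = \lambda^{Y,Y'}_h$. For envy-freeness, I would assume, for contradiction, that $d'$ envies $d$ at $\lambda^{Y,Y'}$ via $x \in \lambda^{Y,Y'}_d$ and $x' \in X_{d'} \setminus \lambda^{Y,Y'}$ with $x'_H = x_H = h$, $x' \succ_h x$, and $x' \in C_{d'}(\lambda^{Y,Y'} \cup \{x'\})$. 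Path independence, applied to $\lambda^{Y,Y'}_{d'} = C_{d'}(Y \cup Y')$, upgrades this to $x' \in C_{d'}(Y \cup Y' \cup \{x'\})$. If $x' \in Y \cup Y'$, then $Y \cup Y' \cup \{x'\} = Y \cup Y'$, so $x' \in C_{d'}(Y \cup Y') = \lambda^{Y,Y'}_{d'}$, contradicting $x' \notin \lambda^{Y,Y'}$. Otherwise $x' \notin Y \cup Y'$; since $x \in Y \cup Y'$, without loss of generality $x \in Y$, and substitutability extracts $x' \in C_{d'}(Y \cup \{x'\})$, producing justified envy of $d'$ towards $d$ in $Y$, a contradiction with envy-freeness of $Y$.

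For (ii), path independence gives $C_d(\lambda^{Y,Y'} \cup Y) = C_d(C_d(Y \cup Y') \cup Y) = C_d(Y \cup Y') = \lambda^{Y,Y'}_d$, so $\lambda^{Y,Y'} \succeq_D Y$; symmetrically $\lambda^{Y,Y'} \succeq_D Y'$. For minimality, let $Z$ be any envy-free allocation with $Z \succeq_D Y$ and $Z \succeq_D Y'$, so that $Z_d = C_d(Z \cup Y) = C_d(Z \cup Y')$ for each $d$. Repeated applications of path independence yield
\[
C_d(Z \cup \lambda^{Y,Y'}) = C_d(Z_d \cup C_d(Y \cup Y')) = C_d(Z_d \cup Y \cup Y') = C_d(Z_d \cup Y') = Z_d,
\]
where the penultimate equality uses $C_d(Z_d \cup Y) = C_d(Z \cup Y) = Z_d$. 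Hence $Z \succeq_D \lambda^{Y,Y'}$, identifying $\lambda^{Y,Y'}$ as the join of $Y$ and $Y'$.
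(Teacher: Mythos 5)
Your proof is correct and follows essentially the same route as the paper's: substitutability plus envy-freeness of $Y$ and $Y'$ produce two contradictory comparisons under $\succ_h$ for the quota bound, path independence handles individual rationality and the join property, and your envy-freeness step matches the paper's almost verbatim. The only cosmetic difference is that your partition of $\lambda^{Y,Y'}_h$ establishes the slightly stronger dichotomy $\lambda^{Y,Y'}_h\subseteq Y_h$ or $\lambda^{Y,Y'}_h\subseteq Y'_h$, whereas the paper extracts the two offending contracts directly from an assumed quota violation.
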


From now on, given two envy-free allocations $Y$ and $Y'$, we denote $\lambda^{Y,Y'}$ as $Y\vee Y'$.
Now we are in a position to present the main theorem of this subsection. This result  states that the set of envy-free allocations has a lattice structure with respect to Blair's partial order.

\begin{theorem}\label{teorema lattice}
The set of envy-free allocations is a lattice under the partial order $\succeq_D$.
\end{theorem}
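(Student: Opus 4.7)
The plan is to invoke a standard lattice-theoretic principle: any finite join-semilattice with a minimum element is automatically a lattice, because for any pair of elements the meet can be constructed as the join of their (finitely many) common lower bounds. Proposition \ref{lemma lambda in E} already establishes that $(\mathcal{E},\succeq_D)$ is a join-semilattice with join operation $Y\vee Y'=\lambda^{Y,Y'}$, and $\mathcal{E}\subseteq 2^X$ is finite. So the remaining work is locating a bottom element and then turning the abstract principle into an explicit construction of the meet.

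First, I would verify that the empty allocation $Y^{\emptyset}$ belongs to $\mathcal{E}$ and sits below every envy-free allocation. Individual rationality holds trivially since $C_a(Y^{\emptyset})=\emptyset$ for each $a\in D\cup H$, and justified envy at $Y^{\emptyset}$ is impossible because condition (i) in the definition of justified envy requires some $x\in Y_d$, which fails when $Y_d=\emptyset$. For any $Y\in \mathcal{E}$, individual rationality of $Y$ gives $C_d(Y\cup Y^{\emptyset})=C_d(Y_d)=Y_d$ for every $d\in D$, so $Y\succeq_D Y^{\emptyset}$.

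Next, fix two envy-free allocations $Y$ and $Y'$ and consider
\[
L(Y,Y')=\{Z\in\mathcal{E}:Y\succeq_D Z\text{ and }Y'\succeq_D Z\}.
\]
This set is nonempty (it contains $Y^{\emptyset}$) and finite. Iterating the binary operation $\vee$ supplied by Proposition \ref{lemma lambda in E}, I obtain $M:=\bigvee L(Y,Y')\in\mathcal{E}$, where closure under $\vee$ keeps $M$ inside $\mathcal{E}$. To see that $M$ is the meet of $Y$ and $Y'$, note that $Y$ is an upper bound (under $\succeq_D$) of every element of $L(Y,Y')$ by definition, so the least upper bound $M$ satisfies $Y\succeq_D M$; the same argument gives $Y'\succeq_D M$. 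Conversely, any common lower bound $Z\in\mathcal{E}$ of $\{Y,Y'\}$ lies in $L(Y,Y')$ and therefore satisfies $M\succeq_D Z$. This makes $M$ the greatest lower bound, i.e., $Y\wedge Y'$.

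The main obstacle is the usual subtlety of iterating the join in a partial order rather than a chain: I need to know that $\vee$ is associative and idempotent modulo the equivalence given by $\succeq_D$, so that $\bigvee L(Y,Y')$ is unambiguously defined. This follows from Proposition \ref{lemma lambda in E} applied pairwise together with antisymmetry of $\succeq_D$ (if $Z\succeq_D Z'$ and $Z'\succeq_D Z$ then $C_d(Z\cup Z')=Z_d=Z'_d$ for each $d$, hence $Z=Z'$). Beyond this bookkeeping, the argument is purely order-theoretic and uses no additional matching-theoretic input.
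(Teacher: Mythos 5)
Your proof is correct and follows essentially the same route as the paper: establish that $(\mathcal{E},\succeq_D)$ is a finite join-semilattice via Proposition \ref{lemma lambda in E}, show the empty allocation $Y^{\emptyset}$ is the minimum element, and conclude. The only difference is that you explicitly unpack the standard order-theoretic fact (constructing the meet as the join of all common lower bounds) that the paper simply cites to Stanley, which is a harmless and correct elaboration.
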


In the following example, given two envy-free allocations, we show how to compute the join between them.  

\begin{example}\label{ejemplo 1}
Consider a market in which $D=\{d_{1},d_{2}\}$ is the set of doctors and $H=\{h_{1},h_{2}\}$ is the set of hospitals with quotas $q_{h_{i}}=2,$ $i=1,2$. A contract $x$ that is signed between doctor $d_{i}$ and hospital $h_{j}$ is denote by $x_{ij}.$ 
The hospitals have the following responsive preferences:
\[
\begin{array}{|c||c|}\hline
\succ _{h_{1}} & \succ _{h_{2}}\\ \hline\hline 
x_{11},y_{11}&x_{12},y_{12} \\
y_{11},y_{21}&y_{12},y_{22} \\
x_{21},y_{11}&x_{12},y_{22} \\
x_{11},y_{21}&x_{22},y_{12} \\
x_{11},x_{21}&x_{12},x_{22} \\
x_{21},y_{21}&x_{22},y_{22} \\
y_{11}&y_{12} \\
x_{11}&x_{12}\\
y_{21}&y_{22} \\
x_{21} &x_{22} \\ 
\hline
\end{array}%
\]%

The doctors are endowed with the following choice functions:\[
\begin{array}{|c|c||c|c|}\hline
X_{d_{1}}& C_{d_{1}} & X_{d_{2}}& C_{d_{2}}\\ \hline \hline
x_{11},x_{12},y_{11},y_{12} & x_{11},x_{12}&x_{21},x_{22},y_{21},y_{22} & x_{21},x_{22}\\
x_{11},y_{11},y_{12} & x_{11}& x_{21},y_{21},y_{22} & x_{21},y_{22}\\
x_{12},y_{11},y_{12} & x_{12},y_{11}&x_{22},y_{21},y_{22} & x_{22},y_{21}\\
x_{11},x_{12},y_{11} & x_{11},x_{12}&x_{21},x_{22},y_{21} & x_{21},x_{22}\\
x_{11},x_{12},y_{12} &x_{11},x_{12}&x_{21},x_{22},y_{22} &x_{21},x_{22}\\
x_{11},x_{12} & x_{11},x_{12}&x_{21},x_{22} & x_{21},x_{22}\\
x_{11},y_{11} & x_{11}&x_{21},y_{21} & x_{21}\\
x_{11},y_{12} & x_{11}&x_{21},y_{22} & x_{21},y_{22}\\
x_{12},y_{11} &  x_{12},y_{11}& x_{22},y_{21} &  x_{22},y_{21}\\
x_{12},y_{12} &  x_{12}& x_{22},y_{22} &  x_{22}\\
y_{11},y_{12} &  y_{11},y_{12}& y_{21},y_{22} &  y_{21},y_{22}\\
x_{11} & x_{11}& x_{21} & x_{21}\\
x_{12} & x_{12}&x_{22} & x_{22}\\
y_{11} & y_{11}&y_{21} & y_{21}\\
y_{12} &  y_{12}& y_{22} &  y_{22}\\\hline
\end{array}%
\]
where, for instance, $X_{d_{1}}$ are all the possible subsets of contracts that name doctor $d_1$. For the subset of contracts $Y=\{x_{11},x_{12},y_{11},y_{12}\}$, $C_{d_1}(Y)= \{x_{11},x_{12}\}$. It is easy to see that the choice functions satisfy substitutability and consistency.

\noindent Notice that this market has twelve envy-free allocations, and only four of them are also  stable allocations. Now, consider the following two envy-free allocations: $Y=\{y_{11},y_{12},y_{21}\}$ and $Y'=\{y_{11},y_{12},y_{22}\}$. In allocation $Y=\{y_{11},y_{12},y_{21}\}$  $d_1$  signs  two contracts, one with $h_1$ ($y_{11}$) and the other one with $h_2$ ($y_{12}$) and $d_2$  signs   contract $y_{21}$ with $h_1$. Since $C_{d_1}(\{y_{11},y_{12}\}\cup \{y_{11},y_{12}\})=\{y_{11},y_{12}\}$ and $C_{d_2}(\{y_{21}\}\cup \{y_{22}\})=\{y_{21},y_{22}\}$, $Y\vee Y'=\{y_{11},y_{12},y_{21},y_{22}\}.$ It is easy to see that $Y\vee Y'\in \mathcal{E}.$ Moreover, $Y\vee Y'\in \mathcal{S}.$
In Figure \ref{reticulado1} we present all envy-free allocations and its lattice structure. Stable allocations are depicted in boldface.
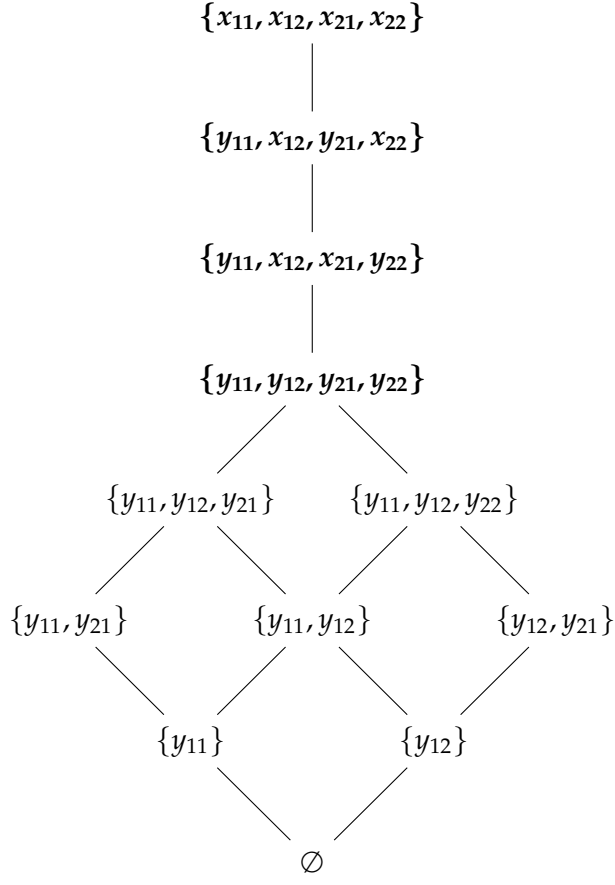
\begin{figure}[h!]
\begin{center}
\begin{tikzpicture}[scale=0.8]

\node (0) at (0,0) {\small{$\emptyset$}};
\node (1) at (-2,2) {\small{$\{y_{11}\}$}};
\node (2) at (2,2) {\small{$\{y_{12}\}$}};
\node (3) at (-4,4) {\small{$\{y_{11},y_{21}\}$}};
\node (4) at (0,4) {\small{$\{y_{11},y_{12}\}$}};
\node (5) at (4,4) {\small{$\{y_{12},y_{21}\}$}};
\node (6) at (-2,6) {\small{$\{y_{11},y_{12},y_{21}\}$}};
\node (7) at (2,6) {\small{$\{y_{11},y_{12},y_{22}\}$}};
\node (8) at (0,8) {\small{$\boldsymbol{\{y_{11},y_{12},y_{21},y_{22}\}}$}};
\node (9) at (0,10) {\small{$\boldsymbol{\{y_{11},x_{12},x_{21},y_{22}\}}$}};
\node (10) at (0,12) {\small{$\boldsymbol{\{y_{11},x_{12},y_{21},x_{22}\}}$}};
\node (11) at (0,14) {\small{$\boldsymbol{\{x_{11},x_{12},x_{21},x_{22}\}}$}};

\draw (0) to (1);
\draw (0) to (2);
\draw (1) to (3);
\draw (1) to (4);

\draw (2) to (4);
\draw (2) to (5);

\draw (3) to (6);
\draw (4) to (6);
\draw (4) to (7);
\draw (5) to (7);
\draw (6) to (8);
\draw (7) to (8);
\draw (8) to (9);
\draw (9) to (10);
\draw (10) to (11);

\end{tikzpicture}
\caption{The lattice of Example \ref{ejemplo 1}.}
\label{reticulado1}
\end{center}
\end{figure}

\end{example}

\subsection{Re-equilibration process}

In this subsection, we define a Tarski operator in the envy-free lattice that
describes a possible re-equilibration process. This process models how, starting from
a worker-quasi-stable matching, a decentralized sequence of offers in which unemployed
workers are hired and cause new unemployments, produces a sequence of
worker-quasi-stable matchings that converges to a stable matching. In the first subsection,
we present the operator, show some of its properties, and prove that the set of
its fixed points is the set of stable matchings. In the second subsection, we discuss the
re-equilibration process, based on our Tarski operator, that models a vacancy chain that
leads towards a stable matching.

For each $Y\in \mathcal{E}$, define the following sets:

$$
\mathcal{B}^Y=\{ x \in X\setminus Y :x \text{ is a blocking contract for } Y\}
$$

$$
\mathcal{B}^Y_{\star}=\{ x\in \mathcal{B}^Y: \text{ there is no } x' \in \mathcal{B}^Y \text{ such that } x'\succ _{h} x  \text{ where } h=x_H=x'_H \}
$$
and for each $d\in D$, 
$$
\mathcal{B}_d ^Y=\{x\in \mathcal{B}^Y_{\star} : x_D=d\}.
$$

Given an envy-free allocation $Y$,  $\mathcal{B}^Y$ denotes the set of all blocking contracts of $Y$. $\mathcal{B}^Y_{\star}$ contains, for each hospital, the most preferred contract of $\mathcal{B}^Y$ among those that name it. Finally, $\mathcal{B}^Y_d$ contains all contracts from $\mathcal{B}^Y_{\star}$ that name doctor $d$. Note that for each $h\in H$, there is at most one contract in $\mathcal{B}^Y_{\star}$ that names it.

Our Tarski operator $\mathcal{T}: \mathcal{E} \to \mathcal{E}$ assigns to each envy-free allocation $Y \in \mathcal{E}$ another envy-free allocation $\mathcal{T}^Y \in \mathcal{E}$ and is defined as follows:
\begin{enumerate}[(i)]
\item for each $d\in D$, $\mathcal{T}^Y_d=C_d \left(Y\cup \mathcal{B}_d^Y \right)$  
\item for each $h\in H$,  $\mathcal{T}^Y_h=\{x\in \mathcal{T}^Y:x_H=h\}$.
\end{enumerate}

The following theorem presents some good properties of operator $\mathcal{T}$. It states that: (i) the image of an envy-free allocation under the operator is also envy-free, (ii) the operator is weakly Pareto improving for the doctors, and (iii) its set of fixed points is the set of stable allocations.   

\begin{theorem}\label{Proposition punto fijo}
For any  $Y \in \mathcal{E}$, the following hold:
\begin{enumerate}[(i)]
\item $\mathcal{T}^Y \in \mathcal{E},$  
\item $\mathcal{T}^Y\succeq_D Y,$ 
\item $\mathcal{T}^Y=Y$ if and only if $Y \in \mathcal{S}.$ 
\end{enumerate}
\end{theorem}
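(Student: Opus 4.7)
The plan is to verify the three items in turn, relying on path independence (guaranteed by properties~(ii) and~(iii) of the doctors' choice functions) and the already-noted fact that every blocking contract $x^*$ of an envy-free allocation at a hospital $h$ must satisfy $|Y_h|<q_h$ and $x^*\succ_h\emptyset$. For part~(i), that $\mathcal{T}^Y$ is an allocation follows from choice-function property~(i) together with the observation $|(\mathcal{B}^Y_\star)_h|\le 1$: the contracts selected for each doctor are at pairwise distinct hospitals, and $\mathcal{T}^Y_h\subseteq Y_h\cup(\mathcal{B}^Y_\star)_h$ combined with the quota slack $|Y_h|<q_h$ (valid whenever the second summand is nonempty) gives $|\mathcal{T}^Y_h|\le q_h$. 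Individual rationality of each doctor then follows from consistency applied to $\mathcal{T}^Y_d=C_d(Y\cup\mathcal{B}^Y_d)\subseteq\mathcal{T}^Y_d\subseteq Y\cup\mathcal{B}^Y_d$, and of each hospital from responsiveness, since every contract in $\mathcal{T}^Y_h$ is acceptable to $h$ (either by IR of $Y$ or because a blocker of an envy-free allocation satisfies $\succ_h\emptyset$).

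For the envy-freeness of $\mathcal{T}^Y$, I argue by contradiction: suppose $d'$ envies $d$ at $\mathcal{T}^Y$ via $(x,x')$ with $x_H=x'_H=h$. The key is to show that $x'\in\mathcal{B}^Y$. Path independence rewrites
\[C_{d'}(\mathcal{T}^Y\cup\{x'\})=C_{d'}(C_{d'}(Y\cup\mathcal{B}^Y_{d'})\cup\{x'\})=C_{d'}(Y\cup\mathcal{B}^Y_{d'}\cup\{x'\}),\]
after which substitutability passes $x'\in C_{d'}$ down to $Y\cup\{x'\}$; responsiveness plus $x'\succ_h x$ (combined with $|Y_h|<q_h$ in the subcase $x\notin Y$) yields $x'\in C_h(Y\cup\{x'\})$; and if $x'\in Y$ the same rewriting would put $x'$ back into $\mathcal{T}^Y_{d'}$, contradicting $x'\notin\mathcal{T}^Y$. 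Thus $x'$ blocks $Y$ at $h$, so $\mathcal{B}^Y_\star$ contains some $x^*$ at $h$ with $x^*\succeq_h x'$. If $x\in Y_d$, then $x^*\succ_h x$ together with $x^*\in C_{x^*_D}(Y\cup\{x^*\})$ makes $x^*_D$ envy $d$ at $Y$; if instead $x\in\mathcal{B}^Y_d$, the uniqueness of $\mathcal{B}^Y_\star$ per hospital forces $x=x^*$ and hence $x^*\succ_h x^*$. Both cases contradict envy-freeness of $Y$.

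Part~(ii) reduces to a one-line application of consistency: from $\mathcal{T}^Y_d\subseteq Y\cup\mathcal{B}^Y_d$ and $Y_d\subseteq Y\cup\mathcal{B}^Y_d$ we get $\mathcal{T}^Y_d=C_d(Y\cup\mathcal{B}^Y_d)\subseteq\mathcal{T}^Y_d\cup Y_d\subseteq Y\cup\mathcal{B}^Y_d$, so $C_d(\mathcal{T}^Y\cup Y)=C_d(\mathcal{T}^Y_d\cup Y_d)=C_d(Y\cup\mathcal{B}^Y_d)=\mathcal{T}^Y_d$. For part~(iii), the ``if'' direction is immediate: $\mathcal{B}^Y=\emptyset$ collapses $\mathcal{T}^Y_d$ to $C_d(Y)=Y_d$ by IR. For the ``only if'' direction I argue by contrapositive: if $Y\notin\mathcal{S}$, pick $x^*\in\mathcal{B}^Y_\star$ and set $d^*=x^*_D$, so $x^*\in\mathcal{B}^Y_{d^*}$; if one had $\mathcal{T}^Y_{d^*}=Y_{d^*}$, then since $Y_{d^*}\subseteq Y\cup\{x^*\}\subseteq Y\cup\mathcal{B}^Y_{d^*}$, consistency would force $C_{d^*}(Y\cup\{x^*\})=Y_{d^*}$, whereas $x^*\in C_{d^*}(Y\cup\{x^*\})$ by the blocking condition, giving $x^*\in Y$ and contradicting $x^*\notin Y$.

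The main obstacle is the envy-freeness step in part~(i): one must reconstruct, from a hypothetical envy inside $\mathcal{T}^Y$, a corresponding blocking contract of $Y$ at the relevant hospital, and this reconstruction requires the path-independence rewriting before substitutability can be invoked on the doctor side, followed by a careful case split (exploiting the ``maximal blocker per hospital'' role of $\mathcal{B}^Y_\star$) to extract the forbidden preference comparison in $Y$.
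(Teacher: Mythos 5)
Your proposal is correct and follows essentially the same route as the paper's proof: the three-step verification of part (i) (allocation via $|(\mathcal{B}^Y_\star)_h|\le 1$ and the quota slack at blocked hospitals, individual rationality via consistency and acceptability of blockers, envy-freeness via the path-independence rewriting, substitutability, and the case split $x\in Y$ versus $x\in\mathcal{B}^Y_d$ against the maximality of $\mathcal{B}^Y_\star$), the choice-function computation for part (ii), and the $\mathcal{B}^Y=\emptyset$ characterization for part (iii). The only differences are cosmetic reorganizations (e.g., you establish that $x'$ blocks $Y$ in both cases before deriving the contradiction, and you justify $\mathcal{T}^Y_{d^*}\neq Y_{d^*}$ in part (iii) slightly more explicitly than the paper does).
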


Another important property of operator $\mathcal{T}$ is its isotonicity. This is, $\mathcal{T}^Y \succeq_D \mathcal{T}^{Y'}$ for envy-free allocations $Y,Y'$ such that $Y' \succeq_D Y'$.\footnote{This property is proven in Lemma \ref{T isotone} in the Appendix.}  This facts toghether with Tarski's fixed point theorem\footnote{Remember that Tarski's theorem \citep{tarski1955lattice} states that if $(\mathcal{L},\geq )$ is a complete lattice and $T:\mathcal{L}\longrightarrow\mathcal{L}$ is isotone,  then the set of fixed points of $T$ is non-empty and forms a complete lattice with respect to $\geq$.} and Theorem \ref{Proposition punto fijo} (iii) allow us to state the following result.

\begin{theorem}\label{teorema S es no vacio y lattice}
The set of stable allocations is non-empty and forms a lattice with respect to the partial order $\succeq_D.$
\end{theorem}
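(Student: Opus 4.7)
The plan is to obtain Theorem~\ref{teorema S es no vacio y lattice} as an immediate consequence of Tarski's fixed point theorem, using three ingredients that are already in place in the paper. First, by Theorem~\ref{teorema lattice}, the set $\mathcal{E}$ of envy-free allocations is a lattice under $\succeq_D$; since the set of contracts $X$ is finite, $\mathcal{E}$ is finite, and hence a \emph{complete} lattice (for a finite lattice, iterating the binary meet and join yields a maximum and minimum for every subset). Second, Lemma~\ref{T isotone}, cited in the footnote just before the statement, guarantees that the operator $\mathcal{T}:\mathcal{E}\to\mathcal{E}$ is isotone with respect to $\succeq_D$. Third, Theorem~\ref{Proposition punto fijo}(iii) identifies the set of fixed points of $\mathcal{T}$ with $\mathcal{S}$.

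With these three facts lined up, I would simply invoke Tarski's theorem (in the form recalled in the footnote before the statement): since $(\mathcal{E},\succeq_D)$ is a complete lattice and $\mathcal{T}$ is isotone on it, the set of fixed points $\mathrm{Fix}(\mathcal{T})$ is non-empty and forms a complete lattice with respect to $\succeq_D$. Combined with the identification $\mathrm{Fix}(\mathcal{T})=\mathcal{S}$ from Theorem~\ref{Proposition punto fijo}(iii), this delivers both conclusions: $\mathcal{S}\neq\emptyset$, and $(\mathcal{S},\succeq_D)$ is a lattice (indeed, a complete sublattice).

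There is really no main obstacle at this stage: all of the genuine technical work has been done earlier, in Proposition~\ref{lemma lambda in E}, Theorem~\ref{teorema lattice}, Theorem~\ref{Proposition punto fijo}, and the isotonicity lemma. The only point worth articulating carefully in the write-up is why $\mathcal{E}$ qualifies as the \emph{complete} lattice demanded by the hypotheses of Tarski's theorem, which is just finiteness of $\mathcal{E}$ (inherited from finiteness of $X$). Accordingly, the proof should be short—essentially a two-line citation chain: $\mathcal{E}$ complete lattice + $\mathcal{T}$ isotone $\Rightarrow$ $\mathrm{Fix}(\mathcal{T})$ non-empty complete lattice $\Rightarrow$ $\mathcal{S}$ non-empty lattice.
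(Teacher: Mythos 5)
Your proposal is correct and follows essentially the same route as the paper: finiteness of $\mathcal{E}$ gives completeness, Lemma~\ref{T isotone} gives isotonicity, Tarski's theorem yields that the fixed-point set is a non-empty lattice, and Theorem~\ref{Proposition punto fijo}(iii) identifies that set with $\mathcal{S}$. The only quibble is your parenthetical ``indeed, a complete sublattice'': Tarski's theorem guarantees the fixed points form a complete lattice under the induced order, but not that meets and joins agree with those of the ambient lattice, so the sublattice claim should be dropped or justified separately.
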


We can further observe that the repeated application of  operator $\mathcal{T}$ can be interpreted as a vacancy chain dynamic within envy-free allocations as follows.
Consider a situation in which hospitals have vacant positions (for instance, after the retirement of some doctors) and no doctor envies the position filled by  another doctor, i.e. and envy-free allocation $Y$. Applying operator $\mathcal{T}$ to the aforementioned allocation involves the following steps:
\begin{enumerate}[(i)]
\item  Among the possible blocking contracts of the envy-free allocation, each hospital selects its most preferred one ($\mathcal{B}_{\star}^Y$ defined in the previous subsection).
\item  Each doctor selects the most preferred subset of contracts among those blocking contracts selected in the previous step that name her and her currently signed contracts (generating the new envy-free allocation $\mathcal{T}^Y$).
\item Once each doctor signs the new contracts, either  some hospitals have new vacant positions $(\mathcal{T}^Y \in \mathcal{E}\setminus \mathcal{S})$ or no new blocking contract can be formed $(\mathcal{T}^Y \in \mathcal{S})$.
\end{enumerate}

If the $\mathcal{T}^Y$ is stable, the process ends reaching a stable allocation. If $\mathcal{T}^Y$ is not stable, the process continues applying operator $\mathcal{T}$ to the envy-free allocation $\mathcal{T}^Y$. The sequence of allocations generated by this process belongs to the set of envy-free allocations by Theorem \ref{Proposition punto fijo} (i) and each allocation in the sequence Pareto improves (for the doctors) upon the previous allocation in the sequence by Theorem \ref{Proposition punto fijo} (ii). This process continues until, by finiteness, it reaches a fixed point that turns out to be a stable allocation by Theorem \ref{Proposition punto fijo} (iii).

In order to prove existence of stable allocations, both  \cite{hatfield2005matching} and \cite{hatfield2017contract} define a lattice over the cartesian product of contracts $X \times X.$ On this lattice, \cite{hatfield2017contract} define a Tarski operator and show that the fixed points of such operator are in one-to-one correspondence with the stable allocations, while \cite{hatfield2005matching} define a similar operator without such one-to-one correspondence. \cite{hatfield2005matching} also study vacancy chain dynamics by means of their Tarski operator. However, we believe that our description of this process in terms of (envy-free) allocations instead of ordered pairs of contracts is simpler and has a clearer economic interpretation.

\subsection{Further results with \textit{LAD}}

In this subsection, by requiring an additional condition on doctors' choice functions, we can describe more accurately the re-equilibration process by means of the lattice structure of the set of  envy-free allocations. This additional condition is the ``law of aggregate demand", that says that when a doctor chooses from an expanded set of contracts, she  signs at least as many contracts as before. Formally, 
\begin{definition}
Choice function $C_d$  satisfies the\textbf{ law of aggregate demand (\textit{LAD})} if $Y'\subseteq Y\subseteq X$ implies $|C_d(Y')|\leq |C_d(Y)|.$
\end{definition}
We know that, starting from an envy-free allocation and iterating our operator $\mathcal{T}$, we reach a fixed point of $\mathcal{T}$. Assuming \textit{LAD}, the lattice structure can help us to identify this fixed point: it is the join of the original envy-free allocation and the hospital-optimal stable allocation $Y^H$.\footnote{The set of stable allocations under substitutable choice functions is very well-structured. It contains two distinctive stable allocations: the doctor-optimal stable allocation $Y^D$ and the hospital-optimal stable allocation $Y^H$. The  allocations $Y^H$ is unanimously considered by all hospitals to be the best among all stable allocations and by
all doctors to be the pessimal stable allocation \citep[see][for more details]{hatfield2017contract}. Analogous opposition of interests results have been identified in most matching settings, including those of \cite{roth1984evolution,blair1988lattice,hatfield2005matching,echenique2004theory}. }
To formally present this result, for $Y \in \mathcal{E}$, let $\mathcal{F}^Y$ denote the fixed point of $\mathcal{T}$  obtained by iterating it starting at allocation $Y$.

\begin{theorem}\label{thLAD}
 Let $Y$ be an envy-free allocation.  If  doctors' choice functions satisfy \textit{LAD}, then  $\mathcal{F}^Y=Y \vee Y^{H}.$
\end{theorem}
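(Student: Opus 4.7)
The plan is to prove $\mathcal{F}^Y=Y\vee Y^H$ by double Blair-domination. The easy direction uses only what is already in place: iterating Theorem \ref{Proposition punto fijo}(ii) along $Y,\mathcal{T}^Y,\mathcal{T}^{\mathcal{T}^Y},\ldots$ gives $\mathcal{F}^Y\succeq_D Y$, and by Theorem \ref{Proposition punto fijo}(iii) $\mathcal{F}^Y$ is stable. Since $Y^H$ is the doctor-pessimal stable allocation (recalled in the footnote of the subsection), $\mathcal{F}^Y\succeq_D Y^H$, so $\mathcal{F}^Y$ is an upper bound of $\{Y,Y^H\}$ in the envy-free lattice; by Proposition \ref{lemma lambda in E}(ii) this gives $\mathcal{F}^Y\succeq_D Y\vee Y^H$.

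For the reverse, I would show that $Z:=Y\vee Y^H$ is itself stable. Once this is granted, isotonicity of $\mathcal{T}$ (Lemma \ref{T isotone}) applied inductively to the iterates $Y_0=Y$, $Y_{i+1}=\mathcal{T}^{Y_i}$ gives $Y_i\preceq_D Z$ for every $i$: the base step is $Y\preceq_D Z$, the inductive step uses $\mathcal{T}^{Y_i}\preceq_D \mathcal{T}^{Z}=Z$, with the last equality coming from Theorem \ref{Proposition punto fijo}(iii) applied to the stable allocation $Z$. Passing to the limit yields $\mathcal{F}^Y\preceq_D Z$, and antisymmetry closes the argument.

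To prove stability of $Z$ I would first extract per-doctor cardinality identities from LAD. Applied to $Y^H\subseteq Y\cup Y^H$ together with individual rationality of $Y^H$, LAD gives $|Z_d|=|C_d(Y\cup Y^H)|\geq |Y^H_d|$; the converse $|Z_d|\leq |Y^H_d|$ is the envy-free-versus-stable cardinality comparison (item (ii) announced in the introduction). Hence $|Z_d|=|Y^H_d|$ for every doctor and, summing and switching sides, $\sum_h|Z_h|=\sum_h|Y^H_h|$. Now suppose for contradiction that $x\in X\setminus Z$ blocks $Z$, with $d=x_D$ and $h=x_H$. Envy-freeness of $Z$ forces $|Z_h|<q_h$. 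Combining path independence with $C_d(W)=C_d(W_d)$ lets me rewrite $x\in C_d(Z\cup\{x\})$ as $x\in C_d(Y\cup Y^H\cup\{x\})$, and substitutability propagates this down to $x\in C_d(Y^H\cup\{x\})$. Stability of $Y^H$ then forces $x\notin C_h(Y^H\cup\{x\})$; since $x\succ_h\emptyset$, responsiveness yields $|Y^H_h|=q_h$ and $x''\succ_h x$ for every $x''\in Y^H_h$. In particular $|Z_h|<|Y^H_h|$, so the sum identity produces a hospital $h'$ with $|Z_{h'}|>|Y^H_{h'}|$. Pick $x'\in Z_{h'}\setminus Y^H_{h'}$; since $Z\subseteq Y\cup Y^H$, $x'\in Y$, so $x'\succ_{h'}\emptyset$ by individual rationality of $Y$. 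The same substitutability chain applied to $x'$ gives $x'\in C_{x'_D}(Y^H\cup\{x'\})$, and stability of $Y^H$ forces $x'\notin C_{h'}(Y^H\cup\{x'\})$; responsiveness (again with $x'\succ_{h'}\emptyset$) then yields $|Y^H_{h'}|=q_{h'}$. But then $|Z_{h'}|>|Y^H_{h'}|=q_{h'}$, contradicting the allocation constraint $|Z_{h'}|\leq q_{h'}$.

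The main obstacle is precisely this stability argument, which hinges on a tight interlocking of three ingredients: LAD (to convert the envy-free per-doctor cardinality bound into a sum identity over hospitals, which is what produces the compensating ``overfull candidate'' hospital $h'$), substitutability (to transport the doctor-side choice condition from $Z$ down to $Y^H$, first for $x$ and then for $x'$), and responsiveness (to translate the resulting ``not in the hospital's chosen set'' statements into the quota-and-preference facts that yield the final contradiction). Without LAD the sum identity collapses and the compensating hospital $h'$ need not exist, so the whole chain breaks.
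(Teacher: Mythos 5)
Your proof is correct and its skeleton coincides with the paper's: both directions of the Blair-domination sandwich are argued exactly as in the paper's proof of Theorem \ref{thLAD}, and the heart of the matter --- stability of $Z=Y\vee Y^{H}$ --- is the paper's Lemma \ref{V es estable}, which you reprove with the same LAD-based counting idea (a blocking hospital is under-subscribed at the join but full at $Y^{H}$, so conservation of the total number of contracts forces some other hospital to be over-subscribed relative to $Y^{H}$, and substitutability plus responsiveness turn that into a contradiction). Two differences are worth flagging. First, you import the inequality $|Z_d|\leq|Y^{H}_d|$ from the envy-free-versus-stable cardinality comparison (Proposition \ref{proposition con LAD}); this is not circular, since that proposition is proved in the paper without appeal to Theorem \ref{thLAD}, but it is a forward reference that drags in the Rural Hospitals Theorem, and it is superfluous: to produce the over-subscribed hospital $h'$ you only need $\sum_{h}|Z_h|\geq\sum_{h}|Y^{H}_h|$, which already follows from your one-sided LAD inequality $|Z_d|\geq|Y^{H}_d|$ together with Lemma \ref{lemma suma lo mismo las cardinalidades con LAD} --- exactly how the paper's Lemma \ref{V es estable} proceeds. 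Second, your terminal contradiction differs: you show $|Y^{H}_{h'}|=q_{h'}$ (because $x'$ is chosen by its doctor from $Y^{H}\cup\{x'\}$ but cannot block the stable $Y^{H}$) and conclude $|Z_{h'}|>q_{h'}$, violating the allocation constraint on $Z$, whereas the paper uses $|Y^{H}_{h'}|<|Z_{h'}|\leq q_{h'}$ to exhibit a blocking contract for $Y^{H}$ at $h'$; both endgames are valid, and the paper's version has the small advantage of proving the lemma for an arbitrary stable allocation rather than only for $Y^{H}$.
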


The following corollary presents an important feature of the structure of the set of envy-free allocations. It states that any envy-free allocation $Y$ that Blair-dominates $Y^H$ is actually a stable allocation. This happens because as $Y$ Blair-dominates $Y^H$ the join between them is $Y$ and, at the same time, that join is  equal to $\mathcal{F}^Y$ by Theorem \ref{thLAD}. Therefore, $Y=\mathcal{F}^Y$. Moreover, by Theorem \ref{Proposition punto fijo} (iii), $\mathcal{F}^Y$ is a stable allocation. 

\begin{corollary}\label{corolario LAD}
Let $Y$ be an envy-free allocation. If  doctors' choice functions satisfy \textit{LAD} and $Y \succeq _{D} Y^{H}$, then $Y$ is a stable allocation.
\end{corollary}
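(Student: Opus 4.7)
The proof is essentially sketched in the paragraph preceding the corollary, and my plan would follow that outline, turning each informal step into a precise argument.

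The first step is to verify that $Y \vee Y^{H} = Y$. Since $Y \succeq_D Y^{H}$ by hypothesis and $Y \succeq_D Y$ trivially (individual rationality of $Y$ gives $C_d(Y) = Y_d$ for each $d$), the allocation $Y$ is an upper bound of $\{Y, Y^{H}\}$ with respect to $\succeq_D$. By the definition of join as least upper bound, $Y \succeq_D Y \vee Y^{H}$; since $Y \vee Y^{H} \succeq_D Y$ by definition of upper bound, antisymmetry of Blair's partial order (which follows immediately from the definition $Y_d = C_d(Y \cup Y')$ for all $d$) forces $Y \vee Y^{H} = Y$.

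The second step invokes Theorem \ref{thLAD}: since doctors' choice functions satisfy \emph{LAD} and $Y \in \mathcal{E}$, we have $\mathcal{F}^Y = Y \vee Y^{H}$. Combining this with the first step yields $\mathcal{F}^Y = Y$.

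Finally, $\mathcal{F}^Y$ is by definition the limit of iterating $\mathcal{T}$ starting at $Y$; if this limit coincides with $Y$ itself, then $Y$ must already be a fixed point of $\mathcal{T}$, i.e., $\mathcal{T}^Y = Y$. Applying Theorem \ref{Proposition punto fijo}(iii) then yields $Y \in \mathcal{S}$, as required. There is no genuine obstacle in this proof: the whole content lies in recognizing that $Y \succeq_D Y^{H}$ collapses the join in Theorem \ref{thLAD} to $Y$, after which the fixed-point characterization of stable allocations finishes the argument. The only point requiring a moment's care is the antisymmetry of $\succeq_D$, which should be noted explicitly to justify $Y \vee Y^{H} = Y$ rather than merely $Y \vee Y^{H} \succeq_D Y$ and $Y \succeq_D Y \vee Y^{H}$.
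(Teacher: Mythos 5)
Your proposal is correct and follows essentially the same route as the paper: the hypothesis $Y \succeq_D Y^{H}$ collapses the join to $Y$, Theorem \ref{thLAD} identifies that join with $\mathcal{F}^Y$, and Theorem \ref{Proposition punto fijo}(iii) converts the fixed-point identity into stability. The only (harmless) detour is your appeal to the least-upper-bound property plus antisymmetry to get $Y \vee Y^{H} = Y$; this follows even more directly from the definitions, since $Y \succeq_D Y^{H}$ literally says $(Y \vee Y^{H})_d = C_d(Y \cup Y^{H}) = Y_d$ for every $d \in D$.
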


Since, $Y^H$ is the doctor-pessimal stable allocation, the implications of this corollary are twofold:  (i) an envy-free allocation that Blair-dominates \textit{any} stable allocation is also stable, and (ii) an  allocation that is envy-free but not stable is  either Blair-incomparable to or  Blair-dominated by $Y^H.$

The following proposition states that, for each doctor, the amount of contracts signed in an envy-free allocation is always less or equal to the amount of contracts signed in any stable allocation. The proof of this proposition uses the rural hospitals theorem that, adapted to the setting of many-to-many matching markets with contracts, states that when preferences are substitutable and satisfy  \textit{LAD}, each agent signs the same number of contracts at every stable allocation \citep[see Theorem 4 in  Appendix B in][]{hatfield2017contract}.

\begin{proposition}\label{proposition con LAD}
Let $Y$ be an envy-free allocation and let $Y'$ be a stable allocation. If doctors' choice functions satisfy \textit{LAD}, $|Y_d|\leq |Y'_d|$ for each $d\in D.$
\end{proposition}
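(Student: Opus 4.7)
The plan is to sandwich $|Y_d|$ between two quantities by using Theorem \ref{thLAD} to move from the arbitrary envy-free $Y$ into the set $\mathcal{S}$, and then using the rural hospitals theorem to compare quantities within $\mathcal{S}$. The key observation is that, under \textit{LAD}, enlarging the menu of contracts offered to a doctor can only weakly increase the number of contracts she signs, so passing from $Y$ to $Y \vee Y^H$ is size-monotone for each doctor.

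First, I would use individual rationality of $Y \in \mathcal{E}$, which gives $C_d(Y)=Y_d$, and in particular $|Y_d|=|C_d(Y)|$. Second, I would invoke Theorem \ref{thLAD} to write $\mathcal{F}^Y = Y \vee Y^H$, and recall that by definition of the join, $(\mathcal{F}^Y)_d = C_d(Y \cup Y^H)$. Since $Y \subseteq Y \cup Y^H$, the hypothesis that $C_d$ satisfies \textit{LAD} yields
\[
|Y_d| \;=\; |C_d(Y)| \;\leq\; |C_d(Y \cup Y^H)| \;=\; |(\mathcal{F}^Y)_d|.
\]

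Third, because $\mathcal{F}^Y$ is a fixed point of $\mathcal{T}$, Theorem \ref{Proposition punto fijo}(iii) ensures $\mathcal{F}^Y \in \mathcal{S}$. I would then apply the rural hospitals theorem (cited from Hatfield and Kominers, 2017, Appendix B, Theorem 4), which under substitutability and \textit{LAD} guarantees that every agent, and in particular every doctor $d$, signs the same number of contracts at any two stable allocations. Applied to the pair $(\mathcal{F}^Y, Y')$, both of which lie in $\mathcal{S}$, this gives $|(\mathcal{F}^Y)_d| = |Y'_d|$. Concatenating the inequalities yields $|Y_d| \leq |Y'_d|$, as required.

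The argument is essentially assembly of previously proven results, so no step is especially delicate; the only thing that merits a quick sanity check is the chain $|Y_d| = |C_d(Y)| \leq |C_d(Y \cup Y^H)|$, which relies simultaneously on the individual rationality part of envy-freeness (for the equality) and on \textit{LAD} (for the inequality). Everything else is a direct appeal to Theorem \ref{thLAD}, Theorem \ref{Proposition punto fijo}(iii), and the rural hospitals theorem.
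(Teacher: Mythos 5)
Your proof is correct, and the overall architecture (sandwich $|Y_d|$ below $|\mathcal{F}^Y_d|$, then invoke stability of $\mathcal{F}^Y$ and the rural hospitals theorem to equate $|\mathcal{F}^Y_d|$ with $|Y'_d|$) matches the paper's. The one place where you genuinely diverge is in how you obtain $|Y_d|\leq|\mathcal{F}^Y_d|$. The paper gets this inequality directly from the operator: by \textit{LAD} and individual rationality, $|\mathcal{T}^Y_d|=|C_d(Y\cup\mathcal{B}^Y_d)|\geq|C_d(Y)|=|Y_d|$ at each step, and iterating along the (finite) Tarski sequence gives $|\mathcal{F}^Y_d|\geq|Y_d|$. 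You instead first invoke Theorem \ref{thLAD} to identify $\mathcal{F}^Y=Y\vee Y^H$ and then apply \textit{LAD} once to the inclusion $Y\subseteq Y\cup Y^H$. Both arguments are sound and non-circular (Theorem \ref{thLAD} does not depend on this proposition), but your route routes through heavier machinery: Theorem \ref{thLAD} rests on Lemma \ref{V es estable}, which already embeds a counting argument of its own, and it also presupposes the existence and characterization of $Y^H$. The paper's step-by-step application of \textit{LAD} to the iteration is more elementary and self-contained; what your version buys in exchange is brevity and a cleaner conceptual picture, since it makes explicit that the doctor's final contract count is realized at the join with the hospital-optimal stable allocation.
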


The following example shows that the requirement of \textit{LAD} is necessary for all the results of this subsection to hold.

\begin{example}
Consider a market in which $D=\{d_{1},d_{2}\}$ is the set of doctors and $H=\{h_{1},h_{2},h_{3}\}$ is the set of hospitals with quotas $q_{h_{i}}=1,$ $i=1,2,3$. The hospitals have the following preferences:

\[
\begin{array}{|c||c||c|}\hline
\succ _{h_{1}} & \succ _{h_{2}}&\succ _{h_{3}} \\ \hline
x_{21}& x_{22}&x_{13}\\
x_{11} &x_{12}&x_{23} \\ 
\hline
\end{array}
\]

\noindent The doctors are endowed with the following choice functions:\[
\begin{array}{|c|c||c|c|}\hline
X_{d_{1}}& C_{d_{1}} & X_{d_{2}}& C_{d_{2}}\\ \hline \hline
x_{11},x_{12},x_{13} & x_{11},x_{12}& x_{21},x_{22},x_{23} & x_{23}\\
x_{11},x_{12} & x_{11},x_{12}& x_{21},x_{22} & x_{21},x_{22}\\
x_{11},x_{13} & x_{11}&x_{21},x_{23} & x_{23}\\
x_{12},x_{13}  & x_{12}&x_{22},x_{23}  & x_{23}\\
x_{11} &x_{11}&x_{21} &x_{21}\\
x_{12} & x_{12}&x_{22} & x_{22}\\
x_{13}  & x_{13}& x_{23}  & x_{23}\\\hline
\end{array}%
\]

\noindent It is easy to see that $\succ_{d_i}$ is substitutable for  $i=1,2$, but $\succ_{d_2}$ does not fulfill \textit{LAD}. Consider the following sets of contracts $\{x_{21},x_{22},x_{23}\}$ and $\{x_{21},x_{22}\}$.  We can observe that $|C_{d_2}(\{x_{21},x_{22},x_{23}\})|=|\{x_{23}\}|<|\{x_{21},x_{22}\}|=|C_{d_2}(\{x_{21},x_{22}\})|$ contradicting the definition of \textit{LAD}. Now, consider the following allocations: $Y^D=\{x_{11},x_{12},x_{23}\},$ $Y=\{x_{11}, x_{23}\},$ $Y^H=\{x_{13},x_{21},x_{22}\}$ and $Y'=\{x_{21},x_{22}\}$. Note that $Y,Y'\in \mathcal{E}\setminus \mathcal{S}$ (contract $x_{12}$ blocks 
$Y$ and contract $x_{13}$ blocks $Y'$), and $Y^D$ and $Y^H$ are the doctor-optimal and the hospital-optimal stable allocations respectively. Now, we can make the following observations: (i)  $Y\vee Y^H=Y$  showing that Theorem \ref{thLAD} does not hold without \textit{LAD}. (ii) $Y^D\succeq_D Y \succeq_D Y^H\succeq_D Y'$ showing that Corollary \ref{corolario LAD} does not hold without \textit{LAD}. (iii) Consider allocations $Y^D$ and $Y'$, and doctor $d_2$. Then, $|\{x_{23}\}|=|Y^D_{d_2}|<|Y'_{d_2}|=|\{x_{21},x_{22}\}|$ showing that Proposition \ref{proposition con LAD} does not hold without \textit{LAD}.
\end{example}

\section{Concluding remarks}\label{seccion concludings}
The main motivation of this paper was to provide a framework to study a re-equilibration process for the most general two-sided model in which stability can be guaranteed (substitutable many-to-many matching with contracts). We accomplish this goal by introducing the concept of envy-free allocation that in itself has a meaningful economic interpretation.

The full set of stable allocations has been computed by means of several algorithms in the matching literature, starting from the one-to-one model all through the many-to-many model with contracts \citep[see][among others]{mcvitie1971stable,irving1986complexity,martinez2004algorithm,dworczak2021deferred,pepa2022matching,bonifacio2022cycles}. An interesting future line of research is the development of an algorithm to compute the full set of envy-free allocations.

\section*{Appendix}

\noindent \begin{proof}[Proof of Proposition \ref{lemma lambda in E}] Let  $Y, Y' \in \mathcal{E}.$ 

\noindent (i) In order to see that $\lambda^{Y,Y'} \in \mathcal{E},$ we proceed in three steps.

\noindent \textbf{Step 1:} $\boldsymbol{\lambda^{Y,Y'} \in \mathcal{A}}.$  Let $ x,x'\in \lambda^{Y,Y'}$ with $ x\neq x' $ and let $d \in D.$ By definition of $\lambda^{Y,Y'} $ and propoerty (i) of $C_d $ we have  $ x_H \neq x'_H.$ It remains to see that $|\lambda^{Y,Y'}_h|\leq q_h$ for each $h\in H$.  Assume there is $h\in H$ such that $|\lambda^{Y,Y'}_h|>q_{h}$. Since $Y$ and $Y'$ are  allocations, we have that $|Y_{h}|\leq q_{h}$ and $|Y_{h}'|\leq q_{h}$. Therefore, there are $x,x'\in X$  such that $x\in \lambda^{Y,Y'}_h\setminus Y_{h}'$ and $x'\in \lambda^{Y,Y'}_h\setminus Y_{h}$. Moreover, as $\lambda^{Y,Y'}_h \subseteq \left(Y_h\cup Y'_{h}\right)\setminus Y_{h}',$ it follows that $x \in Y_h \setminus Y_h'.$ Similarly, $x'\in Y'_h\setminus Y_{h}$. Hence,  $x\neq x'.$ Let $d=x_D$ and $d'=x'_D.$  Since $x\in\lambda^{Y,Y'}_d=C_d(Y \cup Y')$ and $x \notin Y'_h$, by substitutability, \begin{equation}\label{ecu 0 lemma lambda in E}
x\in C_{d}(Y' \cup \{x\}).
\end{equation}
If $x \succ_h x',$ then \eqref{ecu 0 lemma lambda in E} implies that doctor $d$ has justified envy towards doctor $d'$ at $Y',$ contradicting that $Y' \in \mathcal{E}.$ Thus,  
 \begin{equation}\label{ecu 1 lemma lambda in E}
 x'\succ _{h}x.  
\end{equation}  Since $x'\in\lambda^{Y,Y'}_{d'}=C_d(Y \cup Y')$ and $x' \notin Y_h$, by substitutability, \begin{equation}\label{ecu 0 lemma lambda in E bis}
x'\in C_{d'}(Y \cup \{x'\}).
\end{equation}
Now, \eqref{ecu 1 lemma lambda in E} and \eqref{ecu 0 lemma lambda in E bis} imply that doctor $d'$ has justifed envy towards doctor $d$ at $Y,$ contradicting that $Y \in \mathcal{E}.$ Therefore, $|\lambda^{Y,Y'}_h|\leq q_{h}.$ We conclude that $\lambda^{Y,Y'} \in \mathcal{A}.$ 

\noindent \textbf{Step 2:} $\boldsymbol{\lambda^{Y,Y'} \in \mathcal{I}}.$  Let $d\in D.$ By definition of $\lambda^{Y,Y'}$ and path-independence of $C_d,$ 
\begin{equation}\label{ecu 2 lemma lambda in E}
C_{d}\left( \lambda^{Y,Y'} \right) =C_{d}\left( C_{d}(Y\cup Y')\right)=C_{d}(Y\cup Y')=\lambda^{Y,Y'}_d.
\end{equation}
Now, take any $h\in H$ and any $x\in \lambda^{Y,Y'}_h$. By definition of $\lambda^{Y,Y'}$, $x\in Y \cup Y'$. Hence, $x\in Y$ or $x\in  Y'$. Since both $Y$ and $Y'$ are individually rational allocations, $x\succ _{h}\emptyset.$ Then, by responsiveness, $C_{h}\left( \lambda^{Y,Y'} \right) =\lambda^{Y,Y'}_h.$ This fact together with \eqref{ecu 2 lemma lambda in E} proves that  $\lambda^{Y,Y'} \in \mathcal{I}.$

\noindent \textbf{Step 3:} $\boldsymbol{\lambda^{Y,Y'} \in \mathcal{E}}.$  Assume this is not the case.  Then, there is  a doctor $d'\in D$ that has justified envy towards a doctor $d \in D$ (possibly $d=d'$) at $\lambda^{Y,Y'}.$ This implies that  there are $x\in \lambda^{Y,Y'}_d$ and $x'\in X_{d'}\setminus \lambda^{Y,Y'}$ such that 
$x'_H =x_H=h,$
\begin{equation}\label{ecu 4 lemma lambda in E}
x' \succ_{h} x\text{ and }x' \in C_{d'}(\lambda^{Y,Y'} \cup \{x'\}).
\end{equation}
 By definition of $\lambda^{Y,Y'} $, $x'\in C_{d'}\left(
C_{d'}\left( Y\cup Y^{\prime }\right) \cup \{x'\}\right)$ and, by path independence,   
\begin{equation}\label{ecu 5 lemma lambda in E}
x'\in
C_{d'}\left( Y\cup Y^{\prime }\cup \{x'\}\right).
\end{equation}  
Notice that $x' \notin Y \cup Y'.$ Otherwise, \eqref{ecu 5 lemma lambda in E} implies $x' \in  \lambda^{Y,Y'},$ contradicting our hypothesis. Since $x\in \lambda^{Y,Y'}\subseteq Y\cup Y'$, assume w.l.o.g. that $x\in Y.$ By \eqref{ecu 5 lemma lambda in E} and  the substitutability of $C_{d'}$, 
\begin{equation}\label{x in C_h(YUx) y x in C_h(Y'Ux)}
x'\in C_{d'}(Y\cup \{x'\}).
\end{equation}
By \eqref{ecu 4 lemma lambda in E}, $x' \succ_{h} x.$ Therefore, by \eqref{x in C_h(YUx) y x in C_h(Y'Ux)},  doctor $d'$ has justified envy  towards doctor $d$ at $Y$ since. This contradicts that $Y \in \mathcal{E}.$ We conclude that  $\lambda^{Y,Y'} \in \mathcal{E}.$

\noindent (ii) In order to see that $\lambda^{Y,Y'}$ is the join of $Y$ and $Y'$ under the partial order $\succeq_D$, we proceed as follows. By Proposition \ref{lemma lambda in E} (i), $\lambda^{Y,Y'} \in \mathcal{E}.$ First, we prove that $\lambda^{Y,Y'}$ is an upper bound of $Y$ and $Y'$ for the doctors. By definition of $\lambda^{Y,Y'}$ and path independence, 
$$
C_{d}(\lambda^{Y,Y'} \cup Y)= C_{d}(C_{d}(Y \cup Y') \cup Y)=
C_{d}(Y\cup Y' \cup Y)= C_{d}(Y \cup Y')=\lambda^{Y,Y'}_d
$$
for each $d\in D$.  This implies that  $\lambda^{Y,Y'}\succeq_D Y.$ Similarly, $\lambda^{Y,Y'}\succeq_D Y'.$ Second, we prove that $\lambda^{Y,Y'}$ is the join of $Y$ and $Y'$ for the doctors. Let $\overline{Y} \in \mathcal{E}$ such that $\overline{Y}\succeq_D Y$ and $\overline{Y}\succeq_D Y'.$ That is,
\begin{equation}\label{ecu 1 lemma lambda es join}
\overline{Y}_d=C_{d}(\overline{Y} \cup Y) \text{ and } \overline{Y}_d=C_{d}(\overline{Y} \cup Y')
\end{equation}
for each $d\in D.$ We need to show that $ \overline{Y} \succeq_D \lambda^{Y,Y'}$, that is $\overline{Y}_d=C_{d}(\overline{Y} \cup \lambda^{Y,Y'})$ for each $d\in D$. Using repeatedly path independence, \eqref{ecu 1 lemma lambda es join}, and the definition of $\lambda^{Y,Y'}$, 
$$
\overline{Y}_d=C_{d}(\overline{Y} \cup Y)=C_{d}(C_{d}(\overline{Y} \cup Y') \cup Y)=C_{d}(\overline{Y} \cup Y' \cup Y)=$$
$$C_{d}(\overline{Y} \cup C_{d}( Y' \cup Y))=C_{d}(\overline{Y} \cup \lambda^{Y,Y'})
$$
for each $d\in D.$ Thus, $ \overline{Y} \succeq_D \lambda^{Y,Y'}$. Therefore, $\lambda^{Y,Y'}$ is the join for $Y$ and $Y'$.
\end{proof}

\noindent \begin{proof}[Proof of Theorem \ref{teorema lattice}]
First, by Proposition \ref{lemma lambda in E} (ii), the set of envy-free allocations $\mathcal{E}$ forms a join-semilattice under the partial order $\succeq_D$.\footnote{A partially ordered set $\mathcal{L}$ is called a \textit{join-semilattice} if any two elements in $\mathcal{L}$ have a join. If any two elements in  $\mathcal{L}$ also have a meet, then $\mathcal{L}$ is called a \textit{lattice} \citep[see][for more details]{stanley2011enumerative}.} Second, the  empty allocation $Y^{\emptyset}$ in which all hospitals have their positions unfilled (that is by definition an envy-free allocation) is the minimum element of $\mathcal{E}$ under the partial order $\succeq_D$. To see this, let $Y\in\mathcal{E}$. Since $Y_d=C_d(Y \cup Y^{\emptyset})$ for each $d\in D$, it follows that $Y=Y \vee Y^{\emptyset}$. Thus, $Y \succeq_D Y^{\emptyset}$ for each $Y \in \mathcal{E}.$ 
Finally, given that the set of envy-free allocations is finite and is a join-semilattice with a minimum element, it follows that the set of envy-free allocations forms a lattice under the partial order $\succeq_D$  \citep[see][for more details]{stanley2011enumerative}.
\end{proof}

\noindent \begin{proof}[Proof of Theorem \ref{Proposition punto fijo}] Let $Y \in \mathcal{E}.$
\begin{enumerate}[(i)]
\item   In order to see that $\mathcal{T}^Y \in \mathcal{E},$ we proceed in three steps.  

\noindent \textbf{Step 1: $\boldsymbol{\mathcal{T}^Y \in \mathcal{A}}$}.  We need to show that 
\begin{equation}\label{allocation 1}
 \text{ for distinct }x,x' \in \mathcal{T}^Y,x_D \neq x'_D \text{  or  }x_H \neq x'_H,
\end{equation}
and 
\begin{equation}\label{allocation 2}
|\mathcal{T}^Y_h|\leq q_h\text{ for each }h\in H.
\end{equation}
If there are distinct $x,x'\in \mathcal{T}^Y$  such that $x_D \neq x'_D$ we are done, so assume $x_D=x'_D=d$. Since  $x,x'  \in C_d \left(Y\cup \mathcal{B}_d^Y \right),$ by definition of $C_d$, $x_H\neq x'_H.$  This proves \eqref{allocation 1}. 
To see \eqref{allocation 2}, assume otherwise. Then, there is $h\in H$ such that $|\mathcal{T}^Y_h|> q_h.$  Since $Y \in \mathcal{A}$, we have that $|Y_h|\leq q_h.$ Hence, there is  $x\in \mathcal{T}^Y_h \setminus Y_h$. Let $d=x_D.$ Then $x \in \mathcal{T}^Y_d$ and, as $\mathcal{T}^Y_d=C_d \left(Y\cup \mathcal{B}_d^Y \right) \subseteq Y\cup \mathcal{B}_d^Y,$  $x \in   \mathcal{B}_d^Y$ because $x\notin Y.$  Hence, $x$ is a blocking contract for $Y$, i.e.\begin{equation}\label{ecu 1 teorema T allocation}
  x \in C_d \left(Y\cup \lbrace x \rbrace \right) \text{ and }  x \in C_h \left(Y\cup \lbrace x \rbrace \right)
 \end{equation}
Now, we claim  $ |Y_{h}|< q_{h}.$ Assume otherwise that $ |Y_{h}|= q_{h} $. Since $x$ is a blocking contract for $Y,$ there is $x' \in Y  $ with $ x_H=x'_H=h $ such that $  \left(Y_h \setminus \{x'\}\right) \cup \{x\} \succ_h Y_h,$  and, by responsiveness, this happens if and only if 
  \begin{equation}\label{ecu 2 teorema T allocation}
     x\succ_h x'.
  \end{equation}
  Thus, by \eqref{ecu 1 teorema T allocation} and \eqref{ecu 2 teorema T allocation} $ d $ has justified envy towards doctor $x'_D $ at $ Y $. This contradicts that $ Y \in \mathcal{E} $. Then,  $ |Y_{h}|< q_{h},$ proving our claim.  Furthermore, since $ |Y_{h}|< q_{h} $ and we assume that $|\mathcal{T}^Y_h|> q_h$, there is $\widetilde{x} \in X$ such that $\widetilde{x}\neq x$ and $\widetilde{x}\in \mathcal{T}^Y_h \setminus Y_h.$ Let $d'=\widetilde{x}_D.$ Then $\widetilde{x} \in \mathcal{T}^Y_{d'}$ and, as $\mathcal{T}^Y_{d'}=C_{d'} \left(Y\cup \mathcal{B}_{d'}^Y \right) \subseteq Y\cup \mathcal{B}_{d'}^Y,$  $\widetilde{x} \in   \mathcal{B}_{d'}^Y$ because $\widetilde{x}\notin Y.$  Remember that $x \in   \mathcal{B}_d^Y$, thus $x\succ_h y$ for each blocking contract $y$  for $Y$ such that $h=y_H$. In particular, $x\succ_h \widetilde{x}.$ Then $\widetilde{x} \notin \mathcal{B}_{\star}^Y,$ contradicting that  $\widetilde{x}\in\mathcal{B}^Y_{d'}.$  Therefore, \eqref{allocation 2} holds. We conclude that $\mathcal{T}^Y \in \mathcal{A}.$ 

\textbf{Step 2: $\boldsymbol{\mathcal{T}^Y \in \mathcal{I}}$.}  Assume otherwise, then there is $x\in \mathcal{T}^Y$ such that $x\notin C_d(\mathcal{T}^Y)$ or $x\notin C_h(\mathcal{T}^Y)$ where $d=x_D$ and $h=x_H$. By definition of $\mathcal{T}^Y$ and path independence, $\mathcal{T}^Y_{d}=C_d \left(Y \cup \mathcal{B}^Y_{d}\right)=C_d \left(C_d \left(Y \cup \mathcal{B}^Y_{d}\right)\right)=C_d(\mathcal{T}^Y).$ Hence, $x\in \mathcal{T}^Y_d$ implies $x \in C_d(\mathcal{T}^Y)$ and, by our contradiction hypothesis,  $x\notin C_h(\mathcal{T}^Y).$  Since $C_h(\mathcal{T}^Y) \subseteq \mathcal{T}^Y_h \subseteq \mathcal{T}^Y$, by consistency of $C_h$ it follows that $C_h(\mathcal{T}^Y_h)=C_h(\mathcal{T}^Y).$ Hence,  $x\notin C_h(\mathcal{T}^Y_h).$  As $|\mathcal{T}^Y_h \setminus \{x\}| < q_h,$  by Remark \eqref{remark choice}, $x\notin C_h(\mathcal{T}^Y_h)=C_h(\left(\mathcal{T}^Y_h\setminus \{x\}\right) \cup \{x\})$ is equivalent to 
\begin{equation}\label{step 2}
\emptyset \succ_h x.
\end{equation}
The fact that $Y \in \mathcal{I}$ and \eqref{step 2} imply that $x\notin Y.$ Moreover, as $\mathcal{T}^Y_d= C_d\left(Y \cup \mathcal{B}^Y_{d}\right)\subseteq Y \cup \mathcal{B}^Y_{d},$ $x \in \mathcal{T}^Y_d$ implies  $x\in \mathcal{B}^Y_{d}$ and, therefore, $x$ is a blocking contract for $Y$. Thus, $x\in C_h (Y\cup \{x\}).$ Notice that, by consistency of $C_h,$ we have $C_h (Y\cup \{x\})=C_h (Y_h\cup \{x\}),$ so  $x\in C_h (Y_h \cup \{x\}).$ There are two cases to consider. If $|Y_h|= q_h,$  by Remark \ref{remark choice}, there is a contract $y\in Y$ such that $x\succ_h y$. As $Y \in \mathcal{I}$ we have $y\succ_h \emptyset$ and, by transitivity, $x\succ_h  \emptyset.$ If $|Y_h|< q_h$,  by Remark \ref{remark choice}, we  also obtain $x\succ_h \emptyset.$  In either case, we contradict \eqref{step 2}. We conclude that $\mathcal{T}^Y \in \mathcal{I}.$

\textbf{Step 3: $\boldsymbol{\mathcal{T}^Y \in \mathcal{E}}$}.  Assume otherwise. Then, there are $x \in \mathcal{T}^Y$ and $x' \in X \setminus \mathcal{T}^Y$ with $x_D=d,$ $x'_D=d',$  and $x_H=x'_H=h$ such that $x'\succ_h x$ and $x' \in C_{d'}(\mathcal{T}^Y \cup \{x'\}).$ By definition of $\mathcal{T}$ and path-independence, $C_{d'}(\mathcal{T}^Y \cup \{x'\})= C_{d'}(C_{d'}(Y \cup \mathcal{B}_{d'}^Y) \cup \{x'\})= C_{d'}(Y \cup \mathcal{B}_{d'}^Y \cup \{x'\}).$ Therefore, 
\begin{equation}\label{step 3-1}
x' \in C_{d'}(Y \cup \mathcal{B}_{d'}^Y \cup \{x'\}).
\end{equation}
If $x' \in Y,$ then  $x' \in C_{d'}(Y \cup \mathcal{B}_{d'}^Y)=\mathcal{T}^Y_{d'}.$ This contradicts that $x' \in X \setminus \mathcal{T}^Y.$ Hence, 
\begin{equation}\label{step 3-2}
x' \in X \setminus Y. 
\end{equation}
Furthermore, \eqref{step 3-1} and  substitutability imply 
\begin{equation}\label{T in E eq 1}
x' \in C_{d'}(Y \cup \{x'\}). 
\end{equation}
As $\mathcal{T}^Y_{d}=C_{d}(Y \cup \mathcal{B}^Y_{d})\subseteq Y \cup \mathcal{B}^Y_{d}$ and $x \in \mathcal{T}^Y_{d},$ we have  $x \in Y \cup \mathcal{B}^Y_{d}.$ There are two cases to consider:
\begin{enumerate}
\item[$\boldsymbol{1}.$] $\boldsymbol{x \in Y}.$ Then, \eqref{step 3-2} and  \eqref{T in E eq 1} together with the facts that  $x_D=d,$ $x'_D=d',$  $x_H=x'_H=h,$ and  $x'\succ_h x$  imply that $d'$ has justified envy towards $d$ at $Y,$  contradicting that $Y \in \mathcal{E}.$

\item[$\boldsymbol{2}.$] $\boldsymbol{x \in \mathcal{B}^Y_d}.$ Then,  $x \in \mathcal{B}^Y_\star.$ This implies  that $x$ is a blocking contract for $Y$ and, as $Y \in \mathcal{E},$ $|Y_h| < q_h.$  As $\mathcal{T}^Y \in \mathcal{I}$  and $x \in \mathcal{T}^Y,$ $x\succ_h \emptyset.$ Moreover, as $x' \succ_h x$ by this step's hypothesis, it follows that $x' \succ_h \emptyset.$ This last fact together with \eqref{T in E eq 1} and $|Y_h| < q_h$ imply that $x' \in \mathcal{B}^Y.$ Then,  $x' \succ_h x$ contradicts that $x \in \mathcal{B}^Y_\star.$
\end{enumerate}
In either case we reach a contradiction. We conclude that $\mathcal{T}^Y \in \mathcal{E}.$

\item  Let $d \in D.$ 
 By definition of $\mathcal{T}$ and path-independence, 
$$C_d(\mathcal{T}^Y_d \cup Y)=C_d(C_d(Y \cup \mathcal{B}^Y_d)\cup Y)=C_d(Y \cup \mathcal{B}^Y_d\cup Y)=C_d(Y \cup \mathcal{B}^Y_d)=\mathcal{T}_d^Y.$$ Thus, $C_d(\mathcal{T}^Y_d \cup Y)=\mathcal{T}^Y_d.$ As $d$ is arbitrary, $\mathcal{T}^Y \succeq_D Y.$  

\item  ($\Longrightarrow$) Assume $Y \in \mathcal{E} \setminus \mathcal{S}.$ Thus, $\mathcal{B}^Y_\star \neq \emptyset.$ Therefore, there is $d \in D$ such that $\mathcal{B}^Y_d \neq \emptyset.$ This implies that $\mathcal{T}^Y_d=C_d(Y \cup \mathcal{B}^Y_d)\neq Y_d.$ Hence, $\mathcal{T}^Y \neq Y.$ 

\noindent ($\Longleftarrow$) Assume $Y \in \mathcal{S}.$ Thus, $\mathcal{B}^Y=\emptyset.$  Therefore, $\mathcal{B}^Y_d=\emptyset$ for each $d \in D.$ Then, by definition of $\mathcal{T},$ $\mathcal{T}^Y_d=Y_d$ for each $d \in D.$ Hence, $\mathcal{T}^Y=Y.$
\end{enumerate}
\end{proof}

In order to prove Theorem \ref{teorema S es no vacio y lattice}, we first prove the following lemma.
\begin{lemma}\label{T isotone}
If $Y$ and $Y'$ are two doctor envy-free allocation such that $ Y \succeq_D Y' $, then $ \mathcal{T}^{Y} \succeq_D \mathcal{T}^{Y'}$.
\end{lemma}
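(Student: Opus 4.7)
The plan is to show, for each $d \in D$, the equality $C_d(\mathcal{T}^Y \cup \mathcal{T}^{Y'}) = \mathcal{T}^Y_d$. Using path independence together with the hypothesis $C_d(Y \cup Y') = Y_d$ (the content of $Y \succeq_D Y'$), the left-hand side can be rewritten as
\[
C_d(\mathcal{T}^Y \cup \mathcal{T}^{Y'}) = C_d(Y \cup \mathcal{B}^Y_d \cup Y' \cup \mathcal{B}^{Y'}_d) = C_d(Y_d \cup \mathcal{B}^Y_d \cup \mathcal{B}^{Y'}_d).
\]
By consistency, this equals $\mathcal{T}^Y_d = C_d(Y_d \cup \mathcal{B}^Y_d)$ exactly when no element of $\mathcal{B}^{Y'}_d \setminus (Y \cup \mathcal{B}^Y_d)$ appears in the output of $C_d$. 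Since substitutability implies that $x \in C_d(Y_d \cup \mathcal{B}^Y_d \cup \mathcal{B}^{Y'}_d)$ would entail $x \in C_d(Y \cup \{x\})$, the task reduces to showing that for every $x \in \mathcal{B}^{Y'}_d$ with $x \notin Y$ and $x \notin \mathcal{B}^Y_d$, one has $x \notin C_d(Y \cup \{x\})$.

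I would argue this key claim by contradiction: suppose $x \in C_d(Y \cup \{x\})$, set $h = x_H$, and split on $|Y_h|$. An immediate consequence of the envy-freeness of $Y$ (combined with $x \notin Y$ and the assumption) is that $y \succ_h x$ for every $y \in Y_h$, since otherwise $d$ would have justified envy toward $y_D$ at $Y$. If $|Y_h| = q_h$ and $|Y'_h| = q_h$, then $x \in \mathcal{B}^{Y'}$ and Remark~\ref{remark choice} produce some $y' \in Y'_h$ with $x \succ_h y'$; together with $x \in C_d(Y' \cup \{x\})$ this exhibits justified envy of $d$ at $Y'$, contradicting $Y' \in \mathcal{E}$. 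If $|Y_h| = q_h$ and $|Y'_h| < q_h$, then $|Y_h| > |Y'_h|$ gives some $\tilde{y} \in Y_h \setminus Y'_h$; using $Y \succeq_D Y'$ and substitutability one gets $\tilde{y} \in C_{\tilde{y}_D}(Y' \cup \{\tilde{y}\})$, and since $|Y'_h| < q_h$ with $\tilde{y} \succ_h \emptyset$ also $\tilde{y} \in C_h(Y' \cup \{\tilde{y}\})$, so $\tilde{y} \in \mathcal{B}^{Y'}$; but $\tilde{y} \succ_h x$ contradicts $x \in \mathcal{B}^{Y'}_\star$.

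In the remaining case $|Y_h| < q_h$, the fact that $x \succ_h \emptyset$ (from $x \in \mathcal{B}^{Y'}$ via Remark~\ref{remark choice}) together with $x \in C_d(Y \cup \{x\})$ makes $x$ itself a blocking contract for $Y$; since $x \notin \mathcal{B}^Y_\star$, there must exist $x^\star \in \mathcal{B}^Y_\star$ at $h$ with $x^\star \succ_h x$. A short argument using $C_{x^\star_D}(Y \cup Y') = Y_{x^\star_D}$ and path independence first rules out $x^\star \in Y'$ and then delivers $x^\star \in C_{x^\star_D}(Y' \cup \{x^\star\})$. Repeating the sub-case analysis on $|Y'_h|$ with $x^\star$ in place of $x$ produces either $x^\star \in \mathcal{B}^{Y'}$ (contradicting the maximality of $x$ in $\mathcal{B}^{Y'}_\star$) or $y' \succ_h x^\star \succ_h x$ for every $y' \in Y'_h$ (contradicting $x \in \mathcal{B}^{Y'}$). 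The main obstacle, in my view, is handling the four sub-cases on $(|Y_h|, |Y'_h|)$ cleanly while invoking envy-freeness of \emph{both} allocations together with $Y \succeq_D Y'$; no single deduction is delicate, but the bookkeeping requires care.
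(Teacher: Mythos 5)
Your proof is correct and follows essentially the same route as the paper's: the same path-independence reduction to a contract $x \in \mathcal{B}^{Y'}_d \setminus (Y \cup \mathcal{B}^Y_d)$ with $x \in C_d(Y \cup \{x\})$, followed by the same case split on whether $|Y_h| = q_h$, with the contradictions drawn from envy-freeness and from the maximality defining $\mathcal{B}^Y_\star$ and $\mathcal{B}^{Y'}_\star$. The only cosmetic differences are that you re-derive $|Y'_h| < q_h$ inside the sub-cases rather than invoking it up front, and that in the full-hospital case you contradict the maximality of $x$ in $\mathcal{B}^{Y'}_\star$ where the paper instead exhibits justified envy at $Y$ --- two readings of the same comparison.
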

\begin{proof}
Let $Y,Y'\in \mathcal{E}$ be such that $ Y \succeq_D Y' $ and assume that $ \mathcal{T}^{Y} \succeq_D \mathcal{T}^{Y'}$ does not hold. This implies the existence of $ d\in D $ such that
\begin{equation}\label{ecu 1 lemma T isotone}
\mathcal{T}_d^Y\neq C_d \left( \mathcal{T}^{Y} \cup  \mathcal{T}^{Y'}\right).
\end{equation}
Using the definition of $ \mathcal{T}^{Y} $ and  path-independence,
\begin{equation}\label{ecu 2 lemma T isotone}
C_d \left( \mathcal{T}^{Y} \cup  \mathcal{T}^{Y'}\right)= C_{d} \left( C_d\left(Y \cup \mathcal{B}^Y_d \right)  \cup  C_d\left(Y' \cup \mathcal{B}^{Y'}_d \right)\right)
 \end{equation}
 $$=C_{d} \left(Y \cup \mathcal{B}^Y_d  \cup  C_d\left(Y' \cup \mathcal{B}^{Y'}_d \right)\right)=
C_{d} \left(Y \cup \mathcal{B}^Y_d  \cup  Y' \cup \mathcal{B}^{Y'}_d \right) $$
$$=C_{d} \left(Y \cup Y'  \cup \mathcal{B}^Y_d\cup \mathcal{B}^{Y'}_d \right). $$
Using again path-independence, it follows that 
\begin{equation}\label{ecu 3 lemma T isotone}
C_{d} \left(Y \cup Y' \cup \mathcal{B}^Y_d\cup \mathcal{B}^{Y'}_d \right)=C_{d} \left(C_d\left(Y \cup Y'\right) \cup  \mathcal{B}^Y_d\cup \mathcal{B}^{Y'}_d \right)
\end{equation}
as by hypothesis $ C_d\left(Y \cup Y'\right)=Y $, using   \eqref{ecu 2 lemma T isotone} and \eqref{ecu 3 lemma T isotone} we get
\begin{equation}
C_d \left( \mathcal{T}^{Y} \cup  \mathcal{T}^{Y'}\right)=C_{d} \left(Y\cup \mathcal{B}^Y_d\cup \mathcal{B}^{Y'}_d \right)
\end{equation}

Now, using the definition of $ \mathcal{T} $ and \eqref{ecu 1 lemma T isotone}, \eqref{ecu 3 lemma T isotone} becomes
\begin{equation}\label{ecu 4 lemma T isotone}
C_d\left(Y\cup\mathcal{B}^Y_d \right)\neq C_{d} \left(Y\cup \mathcal{B}^Y_d\cup \mathcal{B}^{Y'}_d \right).
\end{equation}
By \eqref{ecu 4 lemma T isotone}, there is a contract $ x\in X $ such that
\begin{equation}\label{ecu 5 lemma T isotone}
x\in C_d\left(Y\cup \mathcal{B}^Y_d\cup \mathcal{B}^{Y'}_d \right)
\end{equation}
and
\begin{equation}\label{ecu 6 lemma T isotone}
x\in \mathcal{B}^{Y'}_d\setminus \left(Y\cup \mathcal{B}^Y_d \right). 
\end{equation}
 Since $ Y'\in \mathcal{E} $ and $ Y' $ has a blocking contract $x,$  $|Y'_h| < q_h$ with $ h=x_H $.  There are two cases to consider:
 
 \begin{enumerate}
 
 \item[$\boldsymbol{1}$.] $\boldsymbol{|Y_h|=q_h}.$ Let $ x'\in Y_h \setminus Y'_h$ and let $d'=x'_D$. Notice that, as  $ Y \succeq_D Y' $, $ x'\in Y =C_{d'}\left(Y \cup Y'\right).$ This, together with $ x'\in Y_h \setminus Y'_h$ implies, by substitutability, that
 \begin{equation}\label{ecu 7 lemma T isotone}
  x'\in C_{d'}\left(Y' \cup  \left\lbrace x'\right\rbrace   \right).
\end{equation}  
 By \eqref{ecu 6 lemma T isotone}, $x \notin Y.$ This fact together with $x' \in Y$ imply $x' \neq x.$ Furthermore, by \eqref{ecu 6 lemma T isotone}, 
\begin{equation}\label{ecu 8 lemma T isotone}
x \succ_h x'.
\end{equation} 
Now, by \eqref{ecu 5 lemma T isotone}, \eqref{ecu 6 lemma T isotone}, and substitutability,
 \begin{equation}\label{ecu 9 lemma T isotone}
  x \in C_{d}\left(Y \cup  \left\lbrace x\right\rbrace   \right).
\end{equation}  
 Since $x' \in Y,$ $x \notin Y,$ \eqref{ecu 8 lemma T isotone}, and \eqref{ecu 9 lemma T isotone}, it follows that  $ d $ has justified envy towards $d' $ at $ Y $, contradicting that $ Y\in \mathcal{E}. $
 
 \item[$\boldsymbol{2}$.] $\boldsymbol{|Y_h|<q_h}.$  By \eqref{ecu 5 lemma T isotone}, \eqref{ecu 6 lemma T isotone}, $|Y_h|<q_h,$ and substitutability, it follows that $x \in C_d(Y \cup \{x\})$ and $x \succ_h \emptyset.$ Therefore, $x \in \mathcal{B}^Y.$  Since, by \eqref{ecu 6 lemma T isotone}, $x \notin \mathcal{B}^Y_d,$ it follows that there is $x' \in \mathcal{B}^Y$ such that $x' \succ_h x.$ Let $d'=x'_D.$ Then,
\begin{equation}\label{ecu 10 lemma T isotone}
x' \in C_{d'}(Y \cup \{x'\}).
\end{equation} 
 By consistency, $C_{d'}(Y\cup \{x'\})=C_{d'}(Y_{d'}\cup \{x'\}).$ As $Y \succeq_D Y',$ $Y_{d'}=C_{d'}(Y\cup Y').$ Therefore, using also  path independence, $$C_{d'}(Y\cup \{x'\})=C_{d'}(C_{d'}(Y\cup Y')\cup \{x'\})=C_{d'}(Y\cup Y' \cup \{x'\}).$$ Then, $x' \in C_{d'}(Y\cup \{x'\})$ implies $x' \in C_{d'}(Y\cup Y'\cup \{x'\})$ and in turn, by substitutability, 
\begin{equation}\label{ecu 11 lemma T isotone}
x' \in C_{d'}(Y' \cup \{x'\}). 
\end{equation}  
 As $|Y_h'|<q_h,$ by \eqref{ecu 11 lemma T isotone} we have that $x' \in \mathcal{B}^{Y'}.$ Moreover, as $x' \succ_h x$ we have  $x \notin \mathcal{B}^{Y'}_{\star},$ contradicting \eqref{ecu 6 lemma T isotone}.  
 \end{enumerate}
 As in each case we reach a contradiction, we conclude that $\mathcal{T}^Y \succeq_D \mathcal{T}^{Y'}.$ 
\end{proof}

\noindent \begin{proof}[Proof of Theorem \ref{teorema S es no vacio y lattice}]
We need to see that operator $\mathcal{T}$ verifies the hypothesis of Tarski's theorem.  First, notice that the  envy-free lattice is finite and therefore complete. Second, by Theorem \ref{Proposition punto fijo} (i), $\mathcal{T}$  maps the envy-free lattice to itself. Finally, $\mathcal{T}$ is isotone by Lemma \ref{T isotone}. Then, by Tarski's theorem, the set of fixed points of $\mathcal{T}$ is non-empty and forms a lattice under $\succeq_D$. Moreover, by Theorem \ref{Proposition punto fijo} (iii), the set of fixed points of operator $\mathcal{T}$ is the set of stable allocations. 
\end{proof}

The following two lemmata are needed to prove Theorem \ref{thLAD}.
\begin{lemma}\label{lemma suma lo mismo las cardinalidades con LAD}
Let $Y$ be an allocation. Then, $\sum_{h\in H}|Y_h|=\sum_{d\in D}|Y_d|.$
\end{lemma}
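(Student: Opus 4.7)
The plan is to prove the claim by a straightforward double-counting argument: both sides will be shown to equal $|Y|$, so the equality follows trivially. The key observation is built into the definition of a contract: by the model, each $x \in X$ comes with a single associated doctor $x_D \in D$ and a single associated hospital $x_H \in H$. Combined with the definition $Y_a = \{x \in Y : a \in \{x_D, x_H\}\}$ and the fact that $D$ and $H$ are disjoint, this gives $Y_h = \{x \in Y : x_H = h\}$ for $h \in H$ and $Y_d = \{x \in Y : x_D = d\}$ for $d \in D$.

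From here I would note that the family $\{Y_h\}_{h \in H}$ is a partition of $Y$: every contract $x \in Y$ belongs to $Y_{x_H}$ and to no other $Y_h$, because each contract has exactly one hospital. Hence $\sum_{h \in H} |Y_h| = |Y|$. By the symmetric argument, $\{Y_d\}_{d \in D}$ is a partition of $Y$, so $\sum_{d \in D} |Y_d| = |Y|$ as well, and the desired identity follows.

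Note that this argument uses only the structural properties of $X$ (each contract has one doctor and one hospital), and in particular does not require either of the two allocation conditions (uniqueness of doctor-hospital pairs and the quota constraints); these merely guarantee that $Y$ is well-defined as an allocation. There is no substantial obstacle: the proof is essentially a one-line double counting, and I would present it in just a few sentences.
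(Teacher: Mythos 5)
Your proof is correct and follows essentially the same route as the paper's: both sides are shown to equal $|Y|$ because $\{Y_h\}_{h\in H}$ and $\{Y_d\}_{d\in D}$ each partition $Y$, since every contract names exactly one hospital and exactly one doctor. Your additional remark that the allocation conditions are not actually needed is accurate but does not change the argument.
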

\begin{proof}
Let $Y$ be an allocation. Recall that $Y_d$ is the subset of contracts in $Y$ that name doctor $d$ and, similarly, $Y_h$ is the subset of contracts in $Y$ that name hospital  $h$. Therefore, $\sum_{h\in H}|Y_h|=|Y|=\sum_{d\in D}|Y_d|.$
\end{proof}

\begin{lemma}\label{V es estable}
Let $Y$ be an envy-free allocation and $Y'$ be a stable allocation. Then, $Y\vee Y'$ is a stable allocation.
\end{lemma}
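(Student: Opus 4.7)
The plan is to invoke Proposition~\ref{lemma lambda in E} to reduce the statement to ruling out blocking contracts for $V := Y\vee Y'$. Suppose for contradiction that $x$ is a blocking contract for $V$, and set $h := x_H$ and $d^{\ast}:=x_D$. From envy-freeness of $V$ together with the envy-free characterization of blocking contracts, we have $|V_h|<q_h$, $x\succ_h\emptyset$, $x\in C_{d^{\ast}}(V\cup\{x\})$, and $y\succ_h x$ for every $y\in V_h$.

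I would first translate these conditions from $V$ to $Y'$. Since $V_{d^{\ast}}=C_{d^{\ast}}(Y\cup Y')$, path independence yields $C_{d^{\ast}}(V\cup\{x\})=C_{d^{\ast}}(Y\cup Y'\cup\{x\})$, and then substitutability applied to $Y'\cup\{x\}\subseteq Y\cup Y'\cup\{x\}$ gives $x\in C_{d^{\ast}}(Y'\cup\{x\})$. The assumption $x\notin V$ rules out $x\in Y'$ (otherwise $x$ would lie in $C_{d^{\ast}}(Y\cup Y')=V_{d^{\ast}}\subseteq V$). Because $Y'$ is stable, $x$ cannot block $Y'$; the doctor side already holds and $x\succ_h\emptyset$, so by responsiveness the hospital side must fail, which forces $|Y'_h|=q_h$ and $y\succ_h x$ for every $y\in Y'_h$. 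In particular, $|V_h|<q_h=|Y'_h|$, and since $V_h\subseteq Y_h\cup Y'_h$ the set $Y'_h\setminus V_h$ is nonempty. Pick $\bar x\in Y'_h\setminus V_h$ and write $\bar d:=\bar x_D$; then $\bar x\succ_h x$ and $\bar x\notin V_{\bar d}=C_{\bar d}(Y\cup Y')$.

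The heart of the proof is a case analysis on whether $V_{\bar d}$ contains a contract at $h$. In the principal case there is $v\in V_{\bar d}$ with $v_H=h$ and $v\neq \bar x$; by the allocation property of $Y'$ such a $v$ must live in $Y_{\bar d}\setminus Y'$. Substitutability applied to $Y'\cup\{v\}\subseteq Y\cup Y'$ together with $v\in V_{\bar d}$ gives $v\in C_{\bar d}(Y'\cup\{v\})$; envy-freeness of $Y'$ applied to the pair $(v,\bar x)$ for doctor $\bar d$ then forces $\bar x\succ_h v$, while envy-freeness of $V$ (as used above) gives $v\succ_h x$. Chaining, $\bar x\succ_h v\succ_h x$. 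Applying envy-freeness of $Y$ to the configuration with $v\in Y_{\bar d}$ at $h$ and $\bar x\in X_{\bar d}\setminus Y$ at $h$ satisfying $\bar x\succ_h v$ then produces the contradiction (either as justified envy of $\bar d$ toward herself at $Y$, or as a blocking contract at $Y$ with $|Y_h|=q_h$, both incompatible with envy-freeness of $Y$). The complementary case, in which $V_{\bar d}$ has no contract at $h$, is handled by the parallel route: since $\bar x\in Y'_{\bar d}\setminus V_{\bar d}$, there must exist a displacing contract $c\in V_{\bar d}\cap(Y_{\bar d}\setminus Y'_{\bar d})$ at some hospital $h'\neq h$, for which substitutability yields $c\in C_{\bar d}(Y'\cup\{c\})$, and envy-freeness of $Y'$ at $h'$ propagates the contradiction back to envy-freeness of $Y$ or $V$.

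The main obstacle is the delicate bookkeeping in this final case analysis: property~(i) of doctors' choice functions (at most one chosen contract per hospital) has to be combined with substitutability to track which of $\bar x$, $v$, $c$ actually appear in $C_{\bar d}(Y\cup\{\bar x\})$ or $C_{\bar d}(V\cup\{\bar x\})$, and it is at this juncture that the proof has to exploit both envy-freeness of $Y$ and stability of $Y'$ simultaneously---the step at which substitutability without the law of aggregate demand still forces the choice of $\bar d$ to be incompatible with the existence of the blocking contract $x$.
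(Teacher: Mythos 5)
Your argument has a fatal gap: it never invokes the law of aggregate demand, yet this lemma is false without it. The lemma sits in the subsection ``Further results with \textit{LAD}'', and the example closing that subsection is a direct counterexample to any \textit{LAD}-free proof: there $Y=\{x_{11},x_{23}\}$ is envy-free, $Y'=Y^{H}=\{x_{13},x_{21},x_{22}\}$ is stable, and $Y\vee Y'=Y$, which is blocked by $x_{12}$. Tracing your proof through that example: $x=x_{12}$, $h=h_{2}$, $\bar x=x_{22}$, $\bar d=d_{2}$, and $V_{d_{2}}=\{x_{23}\}$ contains no contract at $h_{2}$, so you land in your ``complementary case'' with displacing contract $c=x_{23}$ at $h'=h_{3}$. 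But $x_{13}\succ_{h_{3}}x_{23}$, so envy-freeness of $Y'$ at $h'$ is simply satisfied and nothing ``propagates back''; no contradiction is available because the conclusion is genuinely false in this instance. Your principal case also has a hole: to conclude that $\bar d$ has justified envy at $Y$ from $v\in Y_{\bar d}$, $\bar x\in X_{\bar d}\setminus Y$, and $\bar x\succ_{h}v$, you additionally need $\bar x\in C_{\bar d}(Y\cup\{\bar x\})$, which you never establish --- knowing $\bar x\in C_{\bar d}(Y')$ and $\bar x\notin C_{\bar d}(Y\cup Y')$ gives no control over $C_{\bar d}(Y\cup\{\bar x\})$, since the doctor may simply prefer $v$ to $\bar x$ even though the hospital prefers $\bar x$ to $v$.

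The paper's proof takes a global counting route precisely in order to bring \textit{LAD} to bear. After reducing, as you do, to $x\in C_{d}(Y'\cup\{x\})$, $x\succ_{h}\emptyset$, and $|V_{h}|<q_{h}=|Y'_{h}|$ at the blocking hospital, it argues: by \textit{LAD} and individual rationality of $Y'$, $|Y'_{d}|=|C_{d}(Y')|\leq|C_{d}(Y\cup Y')|=|V_{d}|$ for every doctor, so $\sum_{h}|V_{h}|=\sum_{d}|V_{d}|\geq\sum_{d}|Y'_{d}|=\sum_{h}|Y'_{h}|$ by Lemma \ref{lemma suma lo mismo las cardinalidades con LAD}; since $V$ has strictly fewer contracts than $Y'$ at $h$, it must have strictly more at some $\widetilde h$, yielding $\widetilde x\in V\setminus Y'$ with $|Y'_{\widetilde h}|<q_{\widetilde h}$ which then blocks $Y'$. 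That counting step is the missing idea; a purely local chase of the kind you attempt cannot close the argument.
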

\begin{proof}
Let $Y$ be an envy-free allocation and $Y'$ be a stable allocation. By, Proposition \ref{lemma lambda in E} (i),  $Y\vee Y' \in \mathcal{E}$. Assume that $Y\vee Y' \notin \mathcal{S}.$ Thus,  there is a blocking contract ${x}$ for $Y\vee Y'$. Let  ${h}={x}_H$ and $d={x}_D.$ Then,  ${x}\in C_{{d}}(Y\vee Y' \cup \{{x}\})$, ${x}\in C_{{h}}(Y\vee Y' \cup \{{x}\}),$ and
\begin{equation}\label{ecu 1 teorema el V es estale}
|(Y\vee Y')_{{h}}|<q_{{h}}.
\end{equation} By definition of $Y\vee Y'$ and path independence, $C_{{d}}(Y\vee Y' \cup \{{x}\})=C_{{d}}(C_d(Y\cup Y') \cup \{{x}\})= C_{{d}}(Y \cup Y' \cup \{{x}\}).$  Thus, $x\in C_{{d}}(Y \cup Y' \cup \{{x}\}).$ Furthermore, by substitutability, 
\begin{equation}\label{ecu 2 teorema el V es estale}
{x}\in C_{{d}}(Y' \cup \{{x}\}).
\end{equation}
Since $x$ is a blocking contract for $Y\vee Y'$, Remark \ref{remark choice} and \eqref{ecu 1 teorema el V es estale} imply ${x}\succ_{{h}} \emptyset$. This last fact together with the stability of allocation $Y'$ and \eqref{ecu 2 teorema el V es estale} imply that 
\begin{equation}\label{ecu 3 teorema el V es estale}
|Y'_{{h}}|=q_{h}.
\end{equation}
 (Otherwise, ${x}$ is a blocking contract for $Y'$ which is a contradiction). Next, we claim that there is $\widetilde{h} \in H$ such that 
\begin{equation}\label{ecu 3.5 teorema el V es estale}
|(Y\vee Y')_{\widetilde{h}}|>| Y'_{\widetilde{h}}|.
\end{equation} Assume that \eqref{ecu 3.5 teorema el V es estale} does not hold. Then, $|(Y\vee Y')_{h}|\leq| Y'_{h}|$ for each $h\in H$. Notice that, by \eqref{ecu 1 teorema el V es estale} and  \eqref{ecu 3 teorema el V es estale}, for ${h}\in H$ we have $|(Y\vee Y')_{{h}}|<| Y'_{{h}}|.$ 
This implies that $\sum_{h\in H}|(Y\vee Y')_{h}|<\sum_{h\in H}| Y'_{h}|.$ 
Thus, by Lemma \ref{lemma suma lo mismo las cardinalidades con LAD}, 
\begin{equation}\label{ecu 4 teorema el V es estale}
\sum_{d\in D}|(Y\vee Y')_{d}|<\sum_{d\in D}| Y'_{d}|. 
\end{equation}
Furthermore, since $Y' \subseteq Y \cup Y'$ we have, by \textit{LAD} and the fact that $Y'\in \mathcal{I}$, $|Y'_d|=|C_d(Y')|\leq |C_d(Y \cup Y')|$. This implies that, by definition of $Y\vee Y'$, $\sum_{d\in D}| Y'_{d}|\leq \sum_{d\in D}| (Y\vee Y')_{d}|.$ This contradicts \eqref{ecu 4 teorema el V es estale}, implying that \eqref{ecu 3.5 teorema el V es estale} holds and the claim is proven. Thus, there is $\widetilde{h} \in H$ such that $|(Y\vee Y')_{\widetilde{h}}|>| Y'_{\widetilde{h}}|.$ Then, there is a contract $\widetilde{x}\in (Y\vee Y') \setminus Y'$ such that $\widetilde{x}_H=\widetilde{h}.$ Let $\widetilde{d}=\widetilde{x}_D.$ Since $Y\vee Y'\in \mathcal{I},$ $ Y\vee Y'=C_{\widetilde{d}}(Y\vee Y')=C_{\widetilde{d}}(Y\vee Y' \cup\{\widetilde{x}\}).$ Therefore, $x\in Y\vee Y'$ implies $x\in C_{\widetilde{d}}(Y\vee Y' \cup\{\widetilde{x}\}).$ Furthermore, by definition of $Y\vee Y'$ and path independence,  $\widetilde{x}\in C_{\widetilde{d}}(Y \cup Y' \cup\{\widetilde{x}\})$ and, by substitutability, 
\begin{equation}\label{ecu 5 teorema el V es estale}
\widetilde{x}\in C_{\widetilde{d}}(Y' \cup\{\widetilde{x}\}).
\end{equation} 
Given that $Y\vee Y'$ is an allocation, by \eqref{ecu 3.5 teorema el V es estale} we have that  $|Y'_{\widetilde{h}}|<|(Y\vee Y')_{\widetilde{h}}|\leq q_{\widetilde{h}}$. Thus, $|Y'_{\widetilde{h}}|< q_{\widetilde{h}}$, $\widetilde{x} \notin Y'$, and $\widetilde{x}\succ_{\widetilde{h}} \emptyset$ imply, by Remark \ref{remark choice}, that
\begin{equation}\label{ecu 6 teorema el V es estale}
\widetilde{x}\in C_{\widetilde{h}}(Y' \cup\{\widetilde{x}\}).
\end{equation} Then, \eqref{ecu 5 teorema el V es estale} and \eqref{ecu 6 teorema el V es estale} imply that   $\widetilde{x}$ is a blocking contract for $Y'$, contradicting the stability of allocation $Y'$. Therefore,  $Y\vee Y'$ is a stable allocation.
\end{proof}

\noindent \begin{proof}[Proof of Theorem \ref{thLAD}]
Let $Y \in \mathcal{E}.$ By Lemma \ref{V es estable}, $Y \vee Y^{H}$ is a stable allocation. By Lemma \ref{T isotone} and Theorem \ref{Proposition punto fijo} (iii),  $Y \vee Y^{H} \succeq _{D} Y$ implies that $Y \vee Y^{H} \succeq_	D  \mathcal{F}^Y$. Moreover, given that $\mathcal{T}$ is a weakly Pareto improving operator by Theorem \ref{Proposition punto fijo} (ii), $\mathcal{F}^Y \succeq_D Y$. Since $Y^{H}$ is the doctors' pessimal stable allocation, $\mathcal{F}^Y \succeq_D Y^{H}.$ As $\mathcal{F}^Y \succeq_D Y$ and  $\mathcal{F}^Y \succeq_D Y^{H},$ by definition of join,  $\mathcal{F}^Y \succeq _D Y \vee Y^{H}$. Thus, by antisymmetry, $\mathcal{F}^Y = Y \vee Y^{H}.$
\end{proof}

\noindent \begin{proof}[Proof of Proposition \ref{proposition con LAD}]
Let $Y \in \mathcal{E}$ and $d \in D.$ By \emph{LAD} and individual rationality of $Y,$ we have that $|\mathcal{T}^Y_{d}|=|C_d(Y \cup \mathcal{B}^Y_{d})|\geq|C_d(Y)| = |Y_d|.$ Iterating we obtain that $|\mathcal{F}^Y_d| \geq |Y_d|.$ By definition,  $\mathcal{F}^Y$ is a fixed point of our Tarski operator. Thus, $\mathcal{F}^Y$ is stable by Theorem \ref{Proposition punto fijo} (iii). Then, by the Rural Hospital Theorem  
 $|\mathcal{F}^Y_d|=|Y'_d|$ for each $Y' \in \mathcal{S}$ and each $d \in D.$ Therefore,  $|Y_d|\leq|\mathcal{F}^Y_d|=|Y'_d|$ for each $Y' \in \mathcal{S}$ and each $d \in D.$ 
\end{proof}


\end{document}